\newtheorem{theorem}{Theorem}[section]
\newtheorem{proposition}{Proposition}[section]
\newtheorem{remark}{Remark}[section]
\newcommand{\dd}{\mathrm{d}}
\newcommand{\R}{{\mathord{\mathbb R}}}
\newcommand{\C}{{\mathord{\mathbb C}}}
\newcommand{\ee}{\mathrm{e}}
\begin{document}

\title{ Qualitative analysis of magnetic waveguides for two-dimensional Dirac fermions}
\author{{Marie Fialov\'a$^\ddagger$, V\'i{}t Jakubsk\'y$^\dagger$, Mat\v ej Tu\v sek$^\ddagger$}\\
{\small 
\textit{$^\dagger$Department of Theoretical Physics,
Nuclear Physics Institute,
25068 \v Re\v z, Czech Republic}}\\
{\small \textit{$^\ddagger$Department of Mathematics, Czech Technical University in Prague, Trojanova 13, 120 00 Prague, Czech Republic} }\\
\sl{\small{E-mail: fialoma8@fjfi.cvut.cz, jakub@ujf.cas.cz,  tusekmat@fjfi.cvut.cz
} }}
\date{26.1. 2018}
\maketitle

\begin{abstract}
We focus on the confinement of two-dimensional Dirac fermions within the waveguides created by realistic magnetic fields. Understanding of their band structure is of our main concern. We provide easily applicable criteria, mostly depending only on the asymptotic behavior of the magnetic field, that can guarantee existence or absence of the energy bands and provide valuable insight into the systems where analytical solution is impossible. The general results are employed in specific systems where the waveguide is created by the magnetic field of a set of electric wires or magnetized strips. 
\end{abstract}

\section{Introduction}
A great variety of condensed matter systems can host two-dimensional Dirac fermions. The list of the so called Dirac materials \cite{Wehling} contains not only graphene, but also other systems where relativistic quasi-particles were either predicted or even confirmed experimentally. Let us mention silicene, germanene, dichalcogenides \cite{SiGeTheor1}-\cite{dichalcogenides}, or artificial crystal lattices synthesized in the lab with the use of optical traps  \cite{opt1}-\cite{opt3} or molecular manipulations \cite{molecular},  \cite{agr1}. Due to their remarkable characteristics, Dirac materials (and graphene in particular) are hoped to be the building blocks of the post-silicon electronics. This sets a great motivation to understand and control their electronic properties.

Massless Dirac fermions can propagate through electrostatic barriers without being back-scattered. This phenomenon, Klein tunneling, challenges effective control of the quantum transport with the use of the electric field. Therefore, an alternative approach is needed. Cutting the sample into the required form or chemical adsorptions of other atoms (e.g. oxygen, fluoride) is demanding on precision and lacks flexibility. In this respect, control of the Dirac fermions by magnetic or magneto-electric barriers represents a more feasible option \cite{Rozhkov}.

Dynamics of two-dimensional (non-relativistic) electron gas (2DEG) in an inhomogeneous magnetic field was studied already in \cite{Muller}, \cite{Reijniers1}, \cite{Matulis}, three-dimensional Dirac fermions in \cite{Malkova}. It was found that when otherwise homogeneous magnetic field changes its sign along a straight line, the electrons can propagate along the line with a non-vanishing group velocity. This is in coherence with the classical picture, where electrons propagate along the trajectories that are bending back and forth along the interface. These states are called snake states in the literature. 

Confinement of Dirac fermions in graphene by magnetic fields was proposed in \cite{DeMartino1}, \cite{DeMartino2}. Propagation both across and along the magnetic barriers was studied in great number of papers so that we have no hope to provide complete list of the related references. The wave vector filtering  known for 2DEG \cite{Matulis}, where the electrons bouncing on the magnetic barrier are totally reflected for a broad interval of angles, was analyzed for Dirac fermions e.g. in \cite{Myoung}, \cite{luxuye}. Existence of snake states in graphene was addressed in \cite{Oroszlany}, \cite{Ghosh}. Effect of many-body interactions was discussed in \cite{Cohnitz1}, spin snake states were discussed in \cite{PeetersSpin}.
The existence of snake states in graphene was also studied both theoretically and experimentally in the systems  where both magnetic and electric fields  change abruptly along a straight line \cite{Bliokh}, \cite{Cohnitz2}, \cite{LiuTiwari}, \cite{Rickhaus}, \cite{Taychat}, see also the review article \cite{Rozhkov}.

The systems studied in the literature usually possess translational invariance in one direction. It implies conservation of the (longitudinal) momentum $k$ along the barrier.  The energy spectrum of the two-dimensional system can be reconstructed  from the spectra of effectively one-dimensional subsystems with fixed $k$. 
In many of the mentioned works, the external fields are represented by rectangular barriers or step functions \cite{Ghosh}, \cite{Matulis}, \cite{Malkova}, \cite{LiuTiwari}, \cite{DeMartino1}, \cite{Reijniers1}, \cite{Cohnitz1}, \cite{Cohnitz2}. As the spectra of their one-dimensional subsystems are purely discrete, the spectrum of the full two-dimensional Hamiltonian consists of energy bands $E_n(k)$ with integer $n$. The energy bands can coincide with Landau levels for large $|k|$ (the case when the magnetic field is asymptotically constant and of the same sign at infinities) or there can also emerge dispersive states localized at the barrier and moving along it with a non-vanishing group velocity $v_n=\partial_{k}E_n(k)$, \cite{Reijniers1}, \cite{Malkova}. The solvable models with step-like vector potential were discussed e.g. in \cite{Myoung}, \cite{Bliokh}. The effectively one-dimensional subsystem has non-vanishing essential spectrum and possibly also the discrete eigenvalues that correspond to the 
states confined at the barrier. The essential spectra build up the typical wedges of allowed energies in the spectrum  of the full Hamiltonian, while the discrete energy levels of the subsystem form the energy bands $E_n(k)$ \cite{Bliokh}.

The energy bands $E_n(k)$ can be associated with the existence of wave packets  that do not disperse in transverse direction \cite{dispersionless}. 
When the group velocity $v_n(k)$ acquires both positive and negative values, the barrier represents a waveguide for \textit{bidirectional} quantum transport of the wave packets. When $v_n(k)$ is strictly positive or negative, then a \textit{unidirectional} quantum transport of the dispersionless wave packets along the barrier takes place \cite{Bliokh}. 

The analysis of solvable models was not restricted to the potentials in the form of step-like functions. The techniques of supersymmetric quantum mechanics were used to study the systems with smooth vector potential \cite{Fernandez}, \cite{Negro}, electrostatic potential \cite{Roy} or their combination \cite{Jak1}. (Quasi-) exactly solvable systems associated with the solutions of Heun equation were discussed in \cite{Portnoi}.

In the current article, we focus on the qualitative analysis of the quantum transport in the magnetic waveguides. We aim at the systems where explicit solutions of the stationary equation do not exist analytically. Our goal is to provide as detailed as possible information on the spectrum of the considered systems. In particular, we will be interested in the existence and properties of the energy bands that are the hallmark of dispersionless wave packets in the systems with translational invariance. Our analysis is based on the variational principle and asymptotic behavior of the vector potential. We will analyze in detail realistic settings where magnetic field is generated either by a set of electric wires or magnetized strips with different direction of magnetization, posed in the proximity of the two-dimensional Dirac fermions. 

The work is organized as follows. In the next section, we rigorously define the considered model and explore its spectral properties in dependence on the asymptotic behavior of the vector potential. We review existing relevant results and introduce the new tools for spectral analysis in the form of statements together with their proofs. 
In the next section, we focus on the bundles of electric wires with zero net-current. Our motivation is two-fold there. First, the results should contribute to our understanding of the behavior of Dirac fermions in the proximity of electric circuits. Second, we show that the bundles of electric wires can be used as tunable magnetic waveguides, where the characteristics of the waveguide can be altered just by changing the electric current in the wires.
Section \ref{sec:strips} is devoted to the analysis of magnetized strips. We consider the cases of parallel and perpendicular magnetizations, and also discuss the situation where alignment of the vector of magnetization is not precise. The last section is left for discussion and outlook.

\section{The model}
We consider a planar system of Dirac fermions in presence of magnetic field $\mathbf{B}$ that is invariant with respect to translations along some axis. For definiteness, let it be the $y$-axis. Therefore, $\mathbf{B}$ may be viewed as a function of $x$-variable only. Moreover, since the model is two-dimensional, $\mathbf{B}$ has non-zero only its transverse $z$-component, which will be denoted by $B$. For our analysis, it is convenient to choose the following gauge
\begin{equation*} 
 \mathbf{A}_{\text{phys}}(x)=(0,A_{\text{phys},y}(x),0),\qquad A_{\text{phys},y}(x):=\int_0^x B(s)\dd s+const.
\end{equation*}
We fix units so that $\hbar=m_e=c=1$. Then the elementary charge is just $e=\sqrt{\alpha}$, where $\alpha$ is the fine structure constant which measures the strength of the coupling. 

In the vicinity of the Dirac points, the system under consideration is described by the following Hamiltonian
\begin{equation}\label{H_A}
 \hat H_{\mathbf{A}}=\sigma_1(-i\partial_x)+\sigma_2(-i\partial_y+A_y(x))+\sigma_3 M,
\end{equation}
where $\sigma_i$ are the Pauli matrices, $M$ is a non-negative constant and
\begin{equation} \label{eq:gauge}
 A_{y}(x):=\sqrt{\alpha}\int_0^x B(s)\dd s+const.
\end{equation}
Remark that, for a concise presentation, we absorbed $\sqrt{\alpha}$ into the definition of $A_y$.  The domain of operator (\ref{H_A}) and its self-adjointness in particular will be discussed at the beginning of Section \ref{sec:general}.

The translational symmetry of the system allows us to understand the system by analyzing one-dimensional subsystems with fixed longitudinal momentum $k_y$. Indeed, using the partial Fourier transform in $y$-variable,
$$(\mathscr{F}_{y\to k_y} \psi)(x,k_y)=\frac{1}{\sqrt{2\pi}}\int_\R\mathrm{e}^{-ik_y y}\psi(x,y)~\dd y,$$
we infer that $H_{\mathbf{A}}$ is unitarily equivalent to
\begin{equation*}
 H_{\mathbf{A}}:=\sigma_1(-i\partial_x)+\sigma_2(k_y+A_y(x))+\sigma_3 M,
\end{equation*}
which decomposes into the direct integral,
\begin{equation*}
 H_{\mathbf{A}}=\int^\oplus_\R H_{\mathbf{A}}[k_y],\quad H_{\mathbf{A}}[k_y]:=\sigma_1(-i\partial_x)+\sigma_2(k_y+A_y(x))+\sigma_3 M,
\end{equation*}
where $H_{\mathbf{A}}[k_y]$ acts in $x$-variable only (with $k_y$ fixed).
More concretely, we have
\begin{equation}\label{Hky}
 H_{\mathbf{A}}[k_y]=\begin{pmatrix}
             M&-i(\partial_x+k_y+A_y)\\
             -i(\partial_x-(k_y+A_y))&-M
          \end{pmatrix},
\end{equation}
for the so-called fiber of the total operator $H_{\mathbf{A}}$. Once we find the spectrum of $H_{\mathbf{A}}[k_y]$ for each $k_y\in\R$, we can reconstruct the spectrum of $H_{\mathbf{A}}$, cf. \cite[Theorem XIII.85]{RSIV}. In particular, $\lambda$ is an eigenvalue of $H_{\mathbf{A}}$ if and only  if the set of all $k_y$ such that $\lambda$ is an eigenvalue of $H_{\mathbf{A}}[k_y]$ is not of zero measure.

To find the spectrum of $H_{\mathbf{A}}[k_y]$, we will employ the fact that  the square of $H_\mathbf{A}[k_y]$ reduces to  a direct sum of two Schr\"{o}dinger operators,
\begin{equation*}
 H_{\mathbf{A}}[k_y]^2=\begin{pmatrix}
	       -\partial_x^2+A_y'+(k_y+A_y)^2+M^2 & 0\\
	       0 & -\partial_x^2-A_y'+(k_y+A_y)^2+M^2
            \end{pmatrix}=:
            \begin{pmatrix}
             h_+[k_y] & 0\\ 0 & h_-[k_y]
            \end{pmatrix}.
\end{equation*}
Then one can construct eigenstates of $H_{\mathbf{A}}[k_y]$ in terms of those of $h_+[k_y]$ and 
$h_-[k_y]$, see e.g. \cite{VJDK}. For the reader's convenience, we will also briefly discuss the exact relation between the eigenvalues and associated eigenstates of $h_\pm[k_y]$ and those of $H_{\mathbf{A}}[k_y]$ in Section \ref{sec:en_bands}.

In this paper, we are primarily interested in the presence and location of discrete eigenvalues, i.e., the isolated eigenvalues of finite multiplicity, in the spectrum of $H_{\mathbf{A}}[k_y]$ in the case when the magnetic field vanishes asymptotically, i.e., 
\begin{equation} \label{eq:B_vanishes}
 \sqrt{\alpha}\lim_{x\rightarrow\pm\infty}B(x)=\lim_{x\rightarrow\pm\infty}A_y'(x)=0.
\end{equation}
It is worth noting that both  $h_+[k_y]$ and $h_-[k_y]$ act like
\begin{equation*}
 h_{\mathrm{Iw}}:= -\partial_x^2+(k_y+A_y(x))^2 + V(x),
\end{equation*}
i.e., like the fiber of a two-dimensional spin-less Schr\"odinger operator with magnetic and electric fields that are invariant with respect to translations in $y$-direction. The case with $V\equiv const.$ has been studied intensively but only when the limits in \eqref{eq:B_vanishes} are non-zero.  In that case, $h_{\mathrm{Iw}}$ has pure point spectrum, i.e., the spectrum consists of isolated eigenvalues of finite multiplicity (in fact, all of them are simple). See \cite{Iw_85} for the original study of A. Iwatsuka and \cite{Tu_16} for an extension to non-constant $V$'s and for an overview of the known results. However, in the present case \eqref{eq:B_vanishes}, the spectrum possesses also a continuous part, cf. Theorem \ref{theo:gen}.

\section{General observations} \label{sec:general}
Spectral properties of any operator are to a large extend determined by its domain of definition. Hence, its specification should precede any rigorous spectral analysis. 
Since $A_y$ is continuous and vanishes at $\pm\infty$, $(H_{\mathbf{A}}[k_y]-H_{\mathbf{0}}[0])$ is a bounded perturbation of $H_{\mathbf{0}}[0]$. Therefore, by the Kato-Rellich theorem \cite[Theorem X.12]{RSII}, $H_{\mathbf{A}}[k_y]$ is selfadjoint on any domain of selfadjointness of $H_{\mathbf{0}}[0]$, which we choose naturally as the first Sobolev space $W^{1,2}(\R;\C^2)$--roughly speaking the  space of $\C^2$-valued square integrable functions with square integrable derivatives. See \cite{Thaller} for a proof of selfadjointness of the latter operator.  Consequently, $H_{\mathbf{A}}$ defined as a direct integral of $H_{\mathbf{A}}[k_y]$ is also selfadjoint \cite[Theorem XIII.85]{RSIV} and, due to unitarity, the same holds true for $\hat H_{\mathbf{A}}$ defined on the partial Fourier preimage of the domain of $H_{\mathbf{A}}$. Remark that this preimage is just $W^{1,2}(\R^2;\C^2)$.

Our starting point for the spectral analysis of $H_{\mathbf{A}}[k_y]$ will be the main result of \cite{VJDK} which we reproduce here not in its full generality but under some special assumptions on the vector potential.

\begin{theorem}[V.~J., D. Krej\v{c}i\v{r}\'{i}k \cite{VJDK}]\label{theo:gen}
 Let $A_y\in C^1(\R)$ satisfy  
 \begin{equation} \label{eq:A_vanishes}
 \lim_{x\to\pm\infty}A_y(x)=0
 \end{equation}
 and \eqref{eq:B_vanishes}.  Moreover, let $A_y(A_y+2k_y)$ be either integrable or there exists $x_0$ such that $A_y(A_y+2k_y)\leq 0$ for all $x<x_0$. Then
 \begin{enumerate}
  \item $E\in\sigma(H_{\mathbf{A}}[k_y])\Leftrightarrow -E\in\sigma(H_{\mathbf{A}}[k_y])$ 
  and $\sigma_{\mathrm{ess}}=(-\infty,-\sqrt{k_y^2+M^2}]\cup[\sqrt{k_y^2+M^2},+\infty)$,
  \item if either 
  \begin{equation} \label{eq:bs_cond1}
  \int_\R A_y(x)(A_y(x)+2k_y)\dd x<0
  \end{equation}
  or there exists $a\in\R$ and $\delta\in\{\pm1\}$ such that 
  \begin{equation} \label{eq:bs_cond2}
  \int_{-\infty}^a A_y(x)(A_y(x)+2k_y)\dd x<-\frac{2}{\sqrt{3}}\sqrt{\sup_{x\in(a,+\infty)}\big(A_y(x)(A_y(x)+2k_y)+\delta A'_y(x)\big)}-\delta A_y(a)
  \end{equation}
  then $H_{\mathbf{A}}[k_y]$ has at least one discrete eigenvalue in $(-\sqrt{k_y^2+M^2},\sqrt{k_y^2+M^2})$, 
  \item $\pm M$ is never an eigenvalue of $H_{\mathbf{A}}[k_y]$,
  \item when $k_y=0$ or if $k_y\neq 0$ and one of $A_y(A_y+2k_y)\pm A_y'$ is non-negative then there are no discrete eigenvalues in the spectrum of $H_{\mathbf{A}}[k_y]$.
 \end{enumerate}
\end{theorem}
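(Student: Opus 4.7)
My strategy is to exploit the supersymmetric factorisation already laid out in the text: writing
\[ H_{\mathbf{A}}[k_y]=\begin{pmatrix} M & D^*\\ D & -M\end{pmatrix},\qquad D:=-i(\partial_x-(k_y+A_y)), \]
one computes $h_\pm[k_y]=M^2+Q^\mp Q^\pm$ with $Q^\pm:=\mp\partial_x+(k_y+A_y)$. Hence $h_\pm\geq M^2$, the non-zero spectra of $D^*D$ and $DD^*$ coincide through the intertwining $D h_+=h_- D$, and given any eigenpair $h_+\psi=\mu\psi$ with $\mu>M^2$ one checks directly that
\[ \phi_E:=\bigl(\psi,(E+M)^{-1}D\psi\bigr)^T,\qquad E=\pm\sqrt{\mu}, \]
satisfies $H_{\mathbf{A}}[k_y]\phi_E=E\phi_E$, and analogously starting from $h_-$. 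This square-root dictionary reduces the whole theorem to Schr\"odinger analysis of $h_\pm$.

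Part 1 then follows from Weyl's theorem: conditions \eqref{eq:A_vanishes}--\eqref{eq:B_vanishes} make the multiplier $A_y(A_y+2k_y)\pm A_y'$ relatively form-compact with respect to $-\partial_x^2+k_y^2+M^2$, so $\sigma_{\mathrm{ess}}(h_\pm)=[k_y^2+M^2,+\infty)$; the dictionary transports this to the stated essential spectrum of $H_{\mathbf{A}}[k_y]$ and yields the $E\leftrightarrow-E$ symmetry automatically. Part 4 is handled by the same factorisation: if, say, $A_y(A_y+2k_y)+A_y'\geq 0$, then $h_+-(k_y^2+M^2)=-\partial_x^2+A_y(A_y+2k_y)+A_y'\geq 0$, so $h_+$ has no discrete eigenvalues; any putative discrete eigenvalue $\mu<k_y^2+M^2$ of $h_-$ different from $M^2$ would be transported into $\sigma(h_+)$ by SUSY, a contradiction, whereas $\mu=M^2$ would produce an $L^2$ element of $\ker D^*$, which in turn would give $\pm M$ as an eigenvalue of $H_{\mathbf{A}}[k_y]$ and is therefore forbidden by Part 3. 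The case $k_y=0$ is immediate from $h_\pm\geq M^2$ combined with $\sigma_{\mathrm{ess}}(H_{\mathbf{A}}[0])=(-\infty,-M]\cup[M,+\infty)$, which leave no gap in which a discrete eigenvalue could sit.

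For Part 2 I would apply the min-max principle to $h_+[k_y]$. With the normalised trial function $\phi_\epsilon(x):=\sqrt{\epsilon}\,e^{-\epsilon|x|}$,
\[ \langle\phi_\epsilon,(h_+-k_y^2-M^2)\phi_\epsilon\rangle=\epsilon^2+\epsilon\int_\R\bigl[A_y(A_y+2k_y)+A_y'\bigr]e^{-2\epsilon|x|}\,\dd x; \]
integration by parts on the $A_y'$ piece (using $A_y(\pm\infty)=0$) shows that term is $O(\epsilon^2)$, while under \eqref{eq:bs_cond1} the remaining term equals $\epsilon\int_\R A_y(A_y+2k_y)\,\dd x+o(\epsilon)$ with a strictly negative leading coefficient. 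Hence the Rayleigh quotient is negative for $\epsilon$ small enough, giving $\inf\sigma(h_+)<k_y^2+M^2$ and, through the dictionary, a discrete eigenvalue of $H_{\mathbf{A}}[k_y]$ in $(-\sqrt{k_y^2+M^2},\sqrt{k_y^2+M^2})$. For \eqref{eq:bs_cond2} the same principle is applied to $h_\delta$ with a two-piece ansatz, $\phi\propto e^{\alpha(x-a)}$ on $(-\infty,a)$ matched to $e^{-\beta(x-a)}$ on $(a,+\infty)$: integration by parts at $x=a$ produces the boundary term $-\delta A_y(a)$, the supremum on $(a,+\infty)$ bounds the potential contribution there, and the constant $2/\sqrt{3}$ drops out of the constrained optimisation over $(\alpha,\beta)$.

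Part 3 is an ODE computation: the equation $H_{\mathbf{A}}[k_y]\phi=M\phi$ forces $D^*\phi_2=0$ and $D\phi_1=2M\phi_2$, so $\langle\phi_2,D\phi_1\rangle=\langle D^*\phi_2,\phi_1\rangle=0$ gives $\phi_2=0$ for $M>0$ and hence $D\phi_1=0$. The solution $\phi_1(x)=C\exp(k_y x+\int_0^x A_y)$ lies in $L^2$ only if the exponent tends to $-\infty$ at both infinities; since $\int_0^x A_y=o(x)$ by \eqref{eq:A_vanishes}, the linear term $k_y x$ dominates and for $k_y\neq 0$ produces exponential growth at one end, forcing $\phi_1=0$. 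The remaining case $k_y=0$ (and the symmetric treatment of $E=-M$ and of the massless case) is handled by the alternative hypothesis on $A_y(A_y+2k_y)$: e.g.\ if $A_y\equiv 0$ on $(-\infty,x_0)$ then $\phi_1$ is constant there and hence vanishes, while the integrability alternative forces the corresponding exponential to be unbounded at infinity. The main obstacle I expect is the careful trial-function construction and constrained optimisation in Part 2 giving the sharp constant $2/\sqrt{3}$ in \eqref{eq:bs_cond2}; the remaining steps are routine once the SUSY dictionary has been set up.
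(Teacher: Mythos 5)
First, note that the paper does not actually prove Theorem~\ref{theo:gen}: it is imported (in reduced generality) from \cite{VJDK}, and the only information the text gives about its proof is that it rests on the supersymmetric reduction to $h_\pm[k_y]$ described in Section~\ref{sec:en_bands} and on ``an explicit construction of appropriate test function for $h_\pm$''. Your overall strategy --- the factorisation $h_\pm=M^2+Q^\mp Q^\pm$, the eigenvector dictionary $\phi_E=(\psi,(E+M)^{-1}D\psi)^T$, Weyl's theorem for Part~1, positivity of $h_\pm-(k_y^2+M^2)$ for Part~4, and a variational argument for Part~2 --- is exactly this approach, and Parts~1, 4 and the \eqref{eq:bs_cond1} half of Part~2 are sound as written (for \eqref{eq:bs_cond1} with $k_y\neq0$ the hypothesis forces $A_y$ to be integrable near $\pm\infty$, which is what makes your $\mathcal{O}(\epsilon^2)$ estimate of the $A_y'$ term legitimate).

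There are, however, two concrete gaps. For \eqref{eq:bs_cond2} you only assert that a symmetric matched-exponential ansatz plus ``constrained optimisation'' yields the constant $2/\sqrt{3}$; this is precisely the nontrivial computation, and the shape of the inequality dictates a different test function than the one you describe. The boundary term $-\delta A_y(a)$ arises because $\delta\int_{-\infty}^a A_y'\,|\phi|^2\to\delta A_y(a)$ only when the left piece of $\phi$ flattens out (rate $\alpha\to0$), which simultaneously turns the weighted integral of $A_y(A_y+2k_y)$ into $\int_{-\infty}^a A_y(A_y+2k_y)$ --- and it is exactly here that the standing hypothesis ``$A_y(A_y+2k_y)$ integrable or eventually non-positive'' must be invoked to make that limit meaningful; you never use this hypothesis. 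The term $\tfrac{2}{\sqrt{3}}\sqrt{\sup_{(a,+\infty)}(\cdots)}$ comes from optimising a \emph{linear} cutoff of length $L$ on $(a,a+L)$ ($L^{-1}+SL/3$ minimised at $L=\sqrt{3/S}$); your exponential tail gives the constant $1$ instead, so at best you prove a different (in fact stronger) inequality, and you should either carry that out or switch to the cutoff that actually produces $2/\sqrt{3}$. The second gap is in Part~3 for $k_y=0$: your claim that ``the integrability alternative forces the corresponding exponential to be unbounded at infinity'' is false. Integrability of $A_y^2$ does not prevent $\int_0^x A_y\to-\infty$ at both ends; e.g.\ for $A_y(x)=-x/(1+x^2)$ one gets $\phi_1=\exp\bigl(\int_0^x A_y\bigr)=(1+x^2)^{-1/2}\in W^{1,2}(\R)$, i.e.\ a genuine $L^2$ zero mode of $D$ sitting at the threshold $E=M$. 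So the $k_y=0$ case of Part~3 cannot be dispatched the way you propose; it requires either a different argument or the additional hypotheses of the original theorem in \cite{VJDK} that are not visible in the reduced statement reproduced here.
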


\begin{remark}
Note that \eqref{eq:A_vanishes} amounts to an appropriate choice of gauge that may or may not exist.
For example, if $B$ is integrable, one can fix the constant in \eqref{eq:gauge} so that $\lim_{x\to+\infty}A_y(x)=0$. Then $\lim_{x\to-\infty}A_y(x)=0$ if and only if
\begin{equation} \label{eq:int_B_vanishes}
\int_\R B(x) \dd x=0,
\end{equation}
which holds true, e.g., for all integrable odd magnetic fields $B$. 
\end{remark}

\begin{remark} \label{rem:gauge}
 The condition \eqref{eq:bs_cond2} can be modified for a neighbourhood of $+\infty$ as follows
 \begin{equation*}
  \int_{a}^{+\infty} A_y(x)(A_y(x)+2k_y)\dd x<-\frac{2}{\sqrt{3}}\sqrt{\sup_{x\in(-\infty,a)}\big(A_y(x)(A_y(x)+2k_y)+\delta A'_y(x)\big)}+\delta A_y(a),
 \end{equation*}
 whenever $A_y(A_y+2k_y)$ is either integrable or there exists $x_0$ such that $A_y(A_y+2k_y)\leq 0$ for all $x>x_0$.
\end{remark}

Let us stress that from now on we will only deal with magnetic fields that satisfy the general assumptions of Theorem \ref{theo:gen}, i.e.,  \eqref{eq:B_vanishes} and \eqref{eq:A_vanishes}.

\subsection{Basic properties of energy bands} \label{sec:en_bands}

Clearly, the square of any eigenvalue of $H_{\mathbf{A}}[k_y]$ is an eigenvalue of $H_{\mathbf{A}}[k_y]^2$. Conversely, the eigenvalues of $H_{\mathbf{A}}[k_y]^2$ are given exactly as the union of the eigenvalues of $h_\pm[k_y]$. Let us investigate their point spectra in more detail.
It is convenient to rewrite $h_{\pm}[k_y]$ in factorized form, $h_+[k_y]=pp^\dagger+M^2$ and $h_-[k_y]=p^\dagger p+M^2$ with $p:=-i(\partial_x+k_y+A_y)$. As $h_{\pm}[k_y]\geq M^2$, their eigenvalues cannot be smaller than $M^2$.  Moreover, by direct inspection, one can verify that assuming \eqref{eq:A_vanishes} no non-zero solution $f$ of the eigenvalue problem $h_\pm[k_y]f=M^2f$ is square integrable. Therefore, $M^2$  may not be an eigenvalue of $h_\pm[k_y]$. The point spectra of the two operators are identical. Indeed, if $h_+[k_y]\psi=\lambda\psi$ (with $\lambda>M^2$) then $h_-[k_y](p^\dagger\psi)=\lambda p^\dagger\psi$, and if $h_-[k_y]\varphi=\lambda\varphi$ then $h_+[k_y](p\varphi)=\lambda p\varphi$.  As $h_\pm[k_y]$ is a Sturm-Liouville operator, the 
wronskian of two solutions to the eigenvalue problem is constant.  Since these solutions have to belong to the domain of $h_\pm[k_y]$, which, in particular, contains only functions vanishing at $x=\pm\infty$ together with their first derivatives, this constant must be zero.  Consequently, one solution is a constant multiple of the other. Hence, all eigenvalues of $h_\pm[k_y]$ are simple. 

The spinors $\Psi_\pm:=(\psi,(M\pm\sqrt{\lambda})^{-1}p^\dagger\psi)^T$ and $\tilde\Psi_\pm:=((-M\pm\sqrt{\lambda})^{-1}p\varphi,\varphi)^T$ are eigenstates of $H_{\mathbf{A}}[k_y]$ with eigenvalue $\pm\sqrt{\lambda}$. These eigenvalues are simple, i.e.,  $\Psi_\pm$ and $\tilde\Psi_\pm$  may differ only by a multiplication constant. Indeed, as $\lambda$ is a simple eigenvalue of both  $h_\pm[k_y]$, we can choose  $\varphi=p^\dagger\psi$. Consequently, $\psi=(\lambda-M^2)^{-1}p\varphi$ and $\tilde\Psi_\pm=(M\pm\sqrt{\lambda})\Psi_\pm$. 
We conclude that having the point spectrum of $h_+[k_y]$ (or $h_-[k_y]$) we can completely reconstruct the point spectrum of $H_{\mathbf{A}}[k_y]$. Moreover, the latter  is simple and localized symmetrically in $(-\sqrt{k_y^2+M^2},-M)\cup(M,\sqrt{k_y^2+M^2})$.

Due to the minimax principle, the eigenvalues of $h_+[k_y]$ may only accumulate at the bottom of its essential spectrum, which was derived in \cite{VJDK} to be $\sigma_{ess}(h_+[k_y])=[k_y^2+M^2,+\infty)$. Therefore, we can enumerate them in increasing order as $E_n[k_y],\, n=1,2,\ldots N$, where $N$ may be possibly infinity and in general it varies with $k_y$. Remark that there are no embedded eigenvalues in the essential spectrum because we already know that every eigenvalue is simple. Therefore, every eigenvalue is necessarily discrete. The energy bands of $H_{\mathbf{A}}$ are then given as the functions $k_y\mapsto\pm\sqrt{E_n[k_y]}$ on some open intervals. Above, we deduced that they may not intersect each other (it would contradict non-degeneracy of the eigenvalues) and cannot pass via the gap $\R\times[-M,M]$ or cross the curves $k_y\mapsto\pm\sqrt{k_y^2+M^2}$, which enclose the essential spectra of $H_{\mathbf{A}}[k_y]$. Moreover, since $H_{\mathbf{A}}[k_y]$ depends analytically on $k_y$ (in fact, it 
forms the so-called analytic family of type (A), see \cite{RSIV}), the energy bands of $H_\mathbf{A}$ are 
analytic on their domains of definition.

The  Hellmann-Feynman formula says that
\begin{equation} \label{eq:HF}
\frac{\dd}{\dd k_y}E_n[k_y]=2\int_\R(k_y+A_y(x))|\psi_n[k_y](x)|^2\dd x,
\end{equation}
where $\psi_n[k_y]$ stands for the normalized eigenfunction of $h_+[k_y]$ associated with $E_n[k_y]$. Therefore, for all $k_y$ sufficiently positive,  $E^\pm_n[k_y]$ is strictly increasing, and for all $k_y$ sufficiently negative, $E^\pm_n[k_y]$ is strictly decreasing. This implies that the energy bands of $H_{\mathbf{A}}$ tend either to $+\infty$ or $-\infty$ as $|k_y|\to\infty$. In particular, no energy band  may be constant; not even locally, due to analyticity. Consequently, there are never eigenvalues in the spectrum of the total, two-dimensional, operator $H_{\mathbf{A}}$.

Finally, if, for any $k_y>0$, there are infinitely many eigenvalues of $H_{\mathbf{A}}[k_y]$ then all respective energy bands are defined on the whole positive half-axis, i.e. they emerge only from $k_y=0$ and may not disappear in the essential spectrum ( which is $(-\infty,-\sqrt{k_y^2+M^2}]\cup[\sqrt{k_y^2+M^2},+\infty)$, in view of Theorem \ref{theo:gen}) for some positive value of $k_y$. Indeed, assume, e.g., that at some definite $k_y>0$ a new pair of  energy bands emerges from the endpoints of the essential spectrum, i.e., $\pm\sqrt{k_y^2+M^2}$. If we increase $k_y$ slightly then the energy bands that existed also for smaller values of $k_y$ are sandwiched between the new pair. Since there is infinitely many of them in a bounded interval, they have to accumulate somewhere within its closure. But the only possible accumulation points given by $\pm\sqrt{k_y^2+M^2}$ are separated from the interval, which is a contradiction. The same holds true on the negative semi-axis.

\subsection{On existence and number of energy bands}

In this section, we present several conditions for existence or non-existence of eigenvalues of $H_{\mathbf{A}}[k_y]$, which are chiefly derived from Theorem \ref{theo:gen} but are much easier to apply then those of Theorem \ref{theo:gen} directly. Furthermore, in some cases we will be able to decide whether there is either finite or infinite number of them. Let us start with a criterion for slowly decaying vector potentials.

\begin{proposition} \label{prop:slowly}
 Let $A_y$ fulfill the general assumptions of Theorem \ref{theo:gen}. 
\begin{itemize}
\item If there exists $x_0\in\R$ such that for all $x<x_0$, $A_y(x)\geq 0$ or $A_y(x)\leq 0$, respectively, and $A_y$ is not integrable on $(-\infty,x_0)$ then for any $k_y<0$ or any $k_y>0$, respectively, $H_{\mathbf{A}}[k_y]$ has infinite number of discrete eigenvalues. 
\item If there exists $x_0\in\R$ such that for all $x>x_0$, $A_y(x)\geq 0$ or $A_y(x)\leq 0$, respectively, and $A_y$ is not integrable on $(x_0,+\infty)$
then for any $k_y<0$ or any $k_y>0$, respectively, $H_{\mathbf{A}}[k_y]$ has infinite number of discrete eigenvalues. 
\end{itemize}
\end{proposition}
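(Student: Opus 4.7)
The plan is to establish the first bullet in the case $A_y\ge 0$ on $(-\infty,x_0)$ with $k_y<0$; the other three sign variants (including the second bullet involving a half-line at $+\infty$) proceed by entirely analogous arguments, since the common feature in all four cases is that $A_y$ and $k_y$ have opposite signs on the relevant half-line. By the correspondence between the discrete spectra of $h_+[k_y]$ and $H_{\mathbf A}[k_y]$ established in Section~\ref{sec:en_bands}, it is enough to show that $h_+[k_y]$ admits infinitely many eigenvalues below its essential spectrum $[k_y^2+M^2,+\infty)$. Writing
\[
 h_+[k_y]-(k_y^2+M^2)=-\partial_x^2+W,\qquad W(x):=A_y'(x)+A_y(x)\big(A_y(x)+2k_y\big),
\]
the task reduces to producing, for every $N\in\N$, an $N$-dimensional subspace of $W^{1,2}(\R)$ on which the quadratic form $\phi\mapsto\int_\R(|\phi'|^2+W|\phi|^2)\,\dd x$ is strictly negative; the min-max principle then supplies $N$ eigenvalues below $k_y^2+M^2$.

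The key structural observation is that because $A_y\to 0$ at $-\infty$, there exists $x_1<x_0$ with $A_y(x)\le|k_y|$ for all $x<x_1$, whence
\[
 A_y(x)\big(A_y(x)+2k_y\big)\le-|k_y|\,A_y(x)\le 0\qquad(x<x_1).
\]
Non-negativity and non-integrability of $A_y$ on that half-line then allow me, for any fixed $K>0$ and any $N$, to extract pairwise disjoint compact intervals $I_n=[a_n,b_n]\subset(-\infty,x_1)$ of length at least $3$ with $\int_{a_n+1}^{b_n-1}A_y\,\dd x\ge K$. This is accomplished by partitioning $(-\infty,x_1)$ into successive pieces of sufficiently large $A_y$-mass, available in unlimited supply by non-integrability.

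On each $I_n$ I take $\phi_n$ to be the piecewise-linear plateau equal to $1$ on $[a_n+1,b_n-1]$, $0$ outside $I_n$, and linear on the two unit-length transitions. These trial functions have pairwise disjoint supports (hence are $L^2$-orthogonal), a fixed kinetic energy $\|\phi_n'\|_2^2=2$, and, by the preceding sign estimate, a plateau contribution
\[
 \int_\R A_y(A_y+2k_y)|\phi_n|^2\,\dd x\le-|k_y|\int_{a_n+1}^{b_n-1}A_y\,\dd x\le-|k_y|K.
\]
Integration by parts rewrites $\int A_y'|\phi_n|^2\,\dd x=-\int A_y\,(|\phi_n|^2)'\,\dd x$, an integral whose integrand is supported on the two unit transitions, where $A_y\le|k_y|$ and $|(|\phi_n|^2)'|\le 2$, so that its modulus is bounded by $4|k_y|$. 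Combining, $\langle\phi_n,(-\partial_x^2+W)\phi_n\rangle\le 2+4|k_y|-|k_y|K$, which is strictly negative once $K$ is chosen large enough. The linear span of $\phi_1,\dots,\phi_N$ is then an $N$-dimensional negative test subspace, and the arbitrariness of $N$ finishes the proof.

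The one step that requires genuine care is the control of the derivative term $\int A_y'|\phi_n|^2\,\dd x$: a pointwise bound would be dominated by the large mass $\int_{I_n}A_y$, eating up the gain from the plateau contribution. It is the integration by parts that redirects the derivative onto $|\phi_n|^2$, localizing the integrand to the short transition regions where $A_y$ is uniformly small. Everything else is bookkeeping---partitioning a half-line of infinite $A_y$-mass into disjoint pieces of large mass, and invoking min-max.
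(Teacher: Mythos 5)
Your argument is correct, but it takes a more self-contained route than the paper. The paper's proof first verifies the sufficient condition \eqref{eq:bs_cond2} of Theorem \ref{theo:gen} with $a=x_1$ chosen so that $0\le A_y<\varepsilon$ and $\varepsilon+2k_y<0$ on $(-\infty,x_1)$, which makes the left-hand side equal to $-\infty$ and yields one discrete eigenvalue; the infinitude is then obtained only by ``inspecting the proof of Theorem \ref{theo:gen}'' and remarking that the test-function construction there can be repeated on pairwise disjoint supports. You bypass Theorem \ref{theo:gen} entirely and carry out that construction explicitly: reduction to $h_+[k_y]$, extraction of infinitely many disjoint intervals of large $A_y$-mass inside the region where $A_y\le|k_y|$ (so that $A_y(A_y+2k_y)\le-|k_y|A_y\le 0$ there), plateau trial functions with kinetic cost $2$, and the integration by parts that localizes the $A_y'$ term to the short ramps where $A_y$ is uniformly small --- this last step is exactly the point the paper leaves implicit, and your bound $2+4|k_y|-|k_y|K<0$ closes it cleanly. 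The min-max principle applied to the span of the disjointly supported $\phi_n$ (on which the form is diagonal) then gives arbitrarily many eigenvalues of $h_+[k_y]$ below $k_y^2+M^2$, hence, by the correspondence of Section \ref{sec:en_bands}, infinitely many discrete eigenvalues of $H_{\mathbf{A}}[k_y]$ in the spectral gap. What the paper's route buys is brevity by leaning on the machinery of \cite{VJDK}; what yours buys is a complete, verifiable proof of the infinitude claim that does not require the reader to re-open the proof of the cited theorem.
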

\begin{proof}
For definiteness, we consider the case $A_y(x)\geq 0$ for all $x<x_0$, and we choose $k_y<0$. Firstly, take $\varepsilon>0$ such that $(\varepsilon+2k_y)<0$. Due to \eqref{eq:A_vanishes}, one can find $x_1\in\R$ such that $x_1<x_0$ and, for all $x<x_1$,
$0\leq A_y(x)<\varepsilon$.  We have
$$\int_{-\infty}^{x_1}A_y(A_y+2k_y)\leq(\varepsilon+2 k_y)\int_{-\infty}^{x_1}A_y=-\infty.$$
Therefore, \eqref{eq:bs_cond2} with $a=x_1$ is clearly fulfilled, and so there is at least one eigenvalue of $H_{\mathbf{A}}[k_y]$ in $(M,\sqrt{k_y^2+M^2})$ (and, due to the spectral symmetry, at least one eigenvalue in $(-\sqrt{k_y^2+M^2},-M)$). Inspecting the proof of Theorem \ref{theo:gen}, which is based on an explicit construction of appropriate test function for $h_\pm$, one can see that it is even possible to construct infinite number of  test functions with   pair-wise disjoint supports. Consequently, there are infinitely many discrete eigenvalues. The other cases are treated similarly.
\end{proof}

If the vector potential is integrable, i.e. $\int_\R |A_y|<+\infty$, the same holds true for its square, because $A_y$ is bounded under our assumptions. Then \eqref{eq:bs_cond1} may be 
reformulated as follows.

\begin{proposition} \label{prop:fastly}
Suppose $A_y$ fulfills the general assumptions of Theorem \ref{theo:gen} and is integrable. 
\begin{itemize}
 \item 
If $\int_\R A_y>0$ then $H_{\mathbf{A}}[k_y]$ has some discrete eigenvalues for all $k_y<-\frac{\int_\R A_y^2}{2\int_\R A_y}$.
\item 
If $\int_\R A_y<0$ then $H_{\mathbf{A}}[k_y]$ has some discrete eigenvalues for all $ k_y>\frac{\int_\R A_y^2}{2\big|\int_\R A_y\big|}$.
\end{itemize}
\end{proposition}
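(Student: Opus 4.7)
The plan is to apply Theorem \ref{theo:gen}, item 2, condition \eqref{eq:bs_cond1}, directly.

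First, I would verify that the integrability hypothesis of Theorem \ref{theo:gen} item 2 is met: since $A_y$ is continuous and tends to zero at $\pm\infty$, it is bounded on $\R$, and because it is also integrable by assumption, its square $A_y^2$ is integrable too. Combining this with the integrability of $A_y$ itself, the product $A_y(A_y+2k_y)=A_y^2+2k_y A_y$ is integrable for every $k_y\in\R$. Hence the first alternative in the hypothesis of Theorem \ref{theo:gen} is available for all $k_y$, and we may simply analyze when \eqref{eq:bs_cond1} holds.

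Next, I would rewrite \eqref{eq:bs_cond1} as
\begin{equation*}
\int_\R A_y^2\,\dd x + 2k_y\int_\R A_y\,\dd x < 0.
\end{equation*}
This is an affine condition in $k_y$ whose sign of slope equals the sign of $\int_\R A_y$. If $\int_\R A_y>0$, it is equivalent to $k_y<-\tfrac{\int_\R A_y^2}{2\int_\R A_y}$; if $\int_\R A_y<0$, it is equivalent to $k_y>-\tfrac{\int_\R A_y^2}{2\int_\R A_y}=\tfrac{\int_\R A_y^2}{2|\int_\R A_y|}$. In either case, Theorem \ref{theo:gen} item 2 then guarantees at least one discrete eigenvalue of $H_{\mathbf{A}}[k_y]$ in the gap $(-\sqrt{k_y^2+M^2},\sqrt{k_y^2+M^2})$, which is what is claimed.

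There is essentially no obstacle here: the proposition is a direct algebraic rearrangement of \eqref{eq:bs_cond1} once we have justified integrability. The only mild subtlety is noting that boundedness of $A_y$ (needed to pass from integrability of $A_y$ to integrability of $A_y^2$) follows automatically from the standing assumptions $A_y\in C^1(\R)$ and $\lim_{x\to\pm\infty}A_y(x)=0$.
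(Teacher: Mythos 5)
Your proposal is correct and follows exactly the route the paper takes: the paper likewise notes that integrability of $A_y$ plus its boundedness (from continuity and decay at $\pm\infty$) gives integrability of $A_y^2$, and then obtains the stated thresholds by rearranging \eqref{eq:bs_cond1} as an affine inequality in $k_y$. Nothing is missing.
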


If $A_y$ does not change its sign then we have a more explicit, though generally rougher, estimate for the threshold value of $k_y$. 

\begin{proposition} \label{prop:const_sign}
Let us suppose that  $A_y$ fulfills the general assumptions of Theorem \ref{theo:gen}.
\begin{itemize}
 \item If $A_y\geq 0$ and $A_y\not\equiv 0$ then $H_{\mathbf{A}}[k_y]$ has some discrete eigenvalues for all $k_y<-\frac{1}{2}\max A_y$.
\item If $A_y\leq 0$ and $A_y\not\equiv 0$ then $H_{\mathbf{A}}[k_y]$ has some discrete eigenvalues for all $ k_y>\frac{1}{2}|\min A_y|$.
\end{itemize}
\end{proposition}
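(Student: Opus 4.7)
The plan is to reduce Proposition \ref{prop:const_sign} to the previous two propositions by separating the integrable and non-integrable cases. I treat only the case $A_y\geq 0$, $A_y\not\equiv 0$; the other case is symmetric (e.g.\ by reflection $x\mapsto -x$ or by the same argument with signs swapped). Note first that under the general assumptions of Theorem \ref{theo:gen}, $A_y$ is continuous and vanishes at $\pm\infty$, hence bounded, so $\max A_y$ exists and is finite; moreover $\max A_y>0$ since $A_y\geq 0$ and $A_y\not\equiv 0$, so the threshold $-\tfrac12\max A_y$ is strictly negative.

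If $A_y$ is not integrable, then Proposition \ref{prop:slowly} applies directly: there exists some $x_0$ such that $A_y\geq 0$ on, say, either $(-\infty,x_0)$ or $(x_0,+\infty)$ (in fact both, since $A_y\geq 0$ everywhere) and $A_y$ is not integrable on at least one of these half-lines. Consequently $H_{\mathbf{A}}[k_y]$ has infinitely many discrete eigenvalues for every $k_y<0$, and in particular for every $k_y<-\tfrac12\max A_y$.

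If $A_y$ is integrable, the idea is to apply Proposition \ref{prop:fastly} after bounding its threshold. Since $0\leq A_y\leq \max A_y$ pointwise,
\begin{equation*}
\int_\R A_y^2\,\dd x \;\leq\; (\max A_y)\int_\R A_y\,\dd x,
\end{equation*}
and both integrals are strictly positive because $A_y\geq 0$ is continuous and not identically zero. Dividing by $2\int_\R A_y$ yields
\begin{equation*}
\frac{\int_\R A_y^2\,\dd x}{2\int_\R A_y\,\dd x}\;\leq\;\tfrac12\max A_y.
\end{equation*}
Hence $k_y<-\tfrac12\max A_y$ implies $k_y<-\tfrac{\int_\R A_y^2}{2\int_\R A_y}$, and Proposition \ref{prop:fastly} delivers at least one discrete eigenvalue of $H_{\mathbf{A}}[k_y]$.

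There is no real obstacle here: the entire argument is just a crude $L^\infty$-$L^1$ bound on $\int A_y^2$ combined with an appeal to the two preceding propositions. The only thing to check carefully is that the threshold $-\tfrac12\max A_y$ is negative so that the non-integrable case is covered by Proposition \ref{prop:slowly} (which requires $k_y<0$), and that the maximum is attained/finite, both of which follow immediately from $A_y\in C^1(\R)$ and \eqref{eq:A_vanishes}.
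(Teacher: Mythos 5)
Your argument is correct, but it takes a different route from the paper. The paper's own proof is a one-liner: it goes straight to condition \eqref{eq:bs_cond1} of Theorem \ref{theo:gen}, observing that $k_y<-\tfrac12\max A_y$ forces $A_y+2k_y<0$ pointwise, so the integrand $A_y(A_y+2k_y)$ is non-positive everywhere and strictly negative on a set of positive measure; hence $\int_\R A_y(A_y+2k_y)<0$ (possibly $-\infty$) and the theorem applies, with no need to distinguish integrable from non-integrable $A_y$ (the standing hypothesis of the theorem is satisfied because the integrand is globally non-positive). You instead split on integrability and route the two branches through Propositions \ref{prop:slowly} and \ref{prop:fastly}. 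In the integrable branch your $L^\infty$--$L^1$ bound $\int_\R A_y^2\leq(\max A_y)\int_\R A_y$ is exactly the comparison that shows Proposition \ref{prop:fastly} is sharper than Proposition \ref{prop:const_sign}, so that branch is essentially equivalent to the paper's argument; in the non-integrable branch you invoke \eqref{eq:bs_cond2} via Proposition \ref{prop:slowly}, which is a genuinely different criterion and in fact buys you a stronger conclusion there (infinitely many discrete eigenvalues for every $k_y<0$, not just $k_y<-\tfrac12\max A_y$). The price is a longer, two-case proof where the paper's unified verification of \eqref{eq:bs_cond1} suffices. All the auxiliary checks you flag (finiteness and positivity of $\max A_y$, strict positivity of $\int_\R A_y$) are handled correctly.
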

\begin{proof}
 The proof follows directly from \eqref{eq:bs_cond1}, where we require either $A_y+2k_y<0$ or $A_y+2k_y>0$, respectively.
\end{proof}

If one does not care about an explicit range of the values of $k_y$ for which  discrete eigenvalues of  $H_{\mathbf{A}}[k_y]$ exist then one arrives to the following conclusion.

\begin{proposition} \label{prop:gen}
 Let $A_y$ fulfill the general assumptions of Theorem \ref{theo:gen}. 
\begin{itemize}
\item If, for some $a\in\R$, $\int_{-\infty}^a A_y<0$ or $\int_{-\infty}^a A_y>0$, respectively, then there exists $K>0$ such that for all $k_y>K$ or $k_y<-K$, respectively, $H_{\mathbf{A}}[k_y]$ has some discrete eigenvalues.
\item If, for some $a\in\R$, $\int_{a}^\infty A_y<0$ or $\int_{a}^\infty A_y>0$, respectively, then there exists $K>0$ such that for all $k_y>K$ or $k_y<-K$, respectively, $H_{\mathbf{A}}[k_y]$ has some discrete eigenvalues.
\end{itemize}
\end{proposition}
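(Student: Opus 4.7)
The plan is to apply the criterion \eqref{eq:bs_cond2} of Theorem \ref{theo:gen} (or its symmetric twin from Remark \ref{rem:gauge}) with exactly the value of $a$ provided by the hypothesis, and then to compare the growth rates of the two sides in $k_y$. I expect the left-hand side to tend to $-\infty$ \emph{linearly} in $k_y$, whereas the right-hand side decays only like $-\sqrt{k_y}$, so the inequality will be satisfied for all $|k_y|$ large enough.

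Concretely, I focus on the case $I_1:=\int_{-\infty}^a A_y<0$ with $k_y\to+\infty$; the three remaining cases follow by flipping the sign of $k_y$, selecting $\delta\in\{\pm1\}$ appropriately, and, when the hypothesis concerns the right tail, switching to Remark \ref{rem:gauge}. The key decomposition is
\[
\int_{-\infty}^a A_y(A_y+2k_y)\,\dd x=\int_{-\infty}^a A_y^2\,\dd x+2k_y I_1,
\]
whose first summand is a fixed non-negative constant (since $A_y$ is bounded by \eqref{eq:A_vanishes}, the integrability of $A_y$ on $(-\infty,a)$ implied by the finiteness of $I_1$ yields that of $A_y^2$, as in the discussion preceding Proposition \ref{prop:fastly}), while the second is linear in $k_y$ with strictly negative slope. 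The boundedness of both $A_y$ and $A_y'$ further gives
\[
\sup_{x>a}\bigl(A_y^2+2k_y A_y+\delta A_y'\bigr)\leq\|A_y\|_\infty^2+2k_y\|A_y\|_\infty+\|A_y'\|_\infty=O(k_y),
\]
so the RHS of \eqref{eq:bs_cond2} is $-O(\sqrt{k_y})+O(1)$. Comparing the two sides forces the existence of a threshold $K>0$ such that \eqref{eq:bs_cond2} holds for every $k_y>K$, and Theorem \ref{theo:gen}(2) then yields a discrete eigenvalue.

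The main obstacle I anticipate is not the growth-rate comparison itself but verifying the prerequisite of Theorem \ref{theo:gen}, namely integrability on $\R$ or eventual non-positivity of $A_y(A_y+2k_y)$ on a suitable half-line, for the relevant range of $k_y$. The first bullet inherits integrability on $(-\infty,a)$ from the hypothesis, but justifying the condition on $(a,+\infty)$ (or, in the second bullet, on $(-\infty,a)$ when one invokes Remark \ref{rem:gauge}) will require some care with the four sign combinations. Once these prerequisites are dispatched, the rest is the elementary asymptotic comparison above.
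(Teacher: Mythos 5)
Your proposal is correct and follows essentially the same route as the paper: both compare the linear-in-$k_y$ behaviour of the left-hand side of \eqref{eq:bs_cond2}, with slope $2\int_{-\infty}^a A_y$, against a $-\mathcal{O}(\sqrt{|k_y|})$ lower bound for the right-hand side obtained from the boundedness of $A_y$ and $A_y'$, and conclude for $k_y$ of sign opposite to that of the integral. The prerequisite you flag as the main anticipated obstacle is passed over silently in the paper's own proof as well, so nothing beyond your asymptotic comparison is actually carried out there.
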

\begin{proof} Let us prove the first statement as the proof of the second one is almost identical.
 One can observe that the left-hand-side and the right-hand-side of \eqref{eq:bs_cond2} behave differently as $|k_y|\to+\infty$. Indeed,
 we have
 \begin{equation} \label{eq:as1}
 \int_{-\infty}^a A_y(A_y+2k_y)=2\left(\int_{-\infty}^a A_y\right)k_y+\mathcal{O}(1)
 \end{equation}
 as $k_y\to\pm\infty$. On the other hand, 
 $$0\leq\sup_{x\in(a,+\infty)}\big(A_y(x)(A_y(x)+2k_y)+\delta A'_y(x)\big)\leq \sup_{x\in(a,+\infty)}A_y(x)^2+2\sup_{x\in(a,+\infty)}|A_y(x)||k_y|+\sup_{x\in(a,+\infty)}|A'_y(x)|.$$
 Hence, there is a lower bound for the right-hand-side of \eqref{eq:bs_cond2} with the asymptotic expansion
 \begin{equation} \label{eq:as2}
  -2\sqrt{\frac{2}{3}}~\sqrt{\sup_{x\in(a,+\infty)}|A_y(x)|}\sqrt{|k_y|}+\mathcal{O}(1) 
 \end{equation}
 as $k_y\to\pm\infty$. If $k_y$ is of the opposite sign than $\int_{-\infty}^a A_y$ then, for all $|k_y|$ sufficiently large, \eqref{eq:as1} is surely below \eqref{eq:as2}.   
\end{proof}

Remark  that whenever $B\not\equiv 0$, there exists $a\in\R$ such that $\int_{-\infty}^a A_y\neq 0$.
If we had $\int_{-\infty}^a A_y= 0$ for all $a\in\R$ then by taking the derivative with respect to $a$ of the  integral we see that $A_y\equiv 0$, and so $B=A_y'\equiv 0$. Therefore, there are  always some energy bands in the spectrum of $H_{\mathbf{A}}$.

Beside our criteria let us reproduce here a known result on the Schr\"odinger operators that can be found in \cite{Teschl} and that we applied on $h_\pm[k_y]$. Recall that $k_y^2+M^2$ is the minimum of the essential spectrum of $h_\pm[k_y]$.

\begin{proposition}[\cite{Teschl}, Corollary 9.43] \label{prop:teschl} Suppose $A_y$ obeys \eqref{eq:B_vanishes} and \eqref{eq:A_vanishes}. If
\begin{equation} \label{eq:teschl1}
\limsup_{|x|\to+\infty}\big(x^2 (A_y(A_y+2k_y)\pm A'_y(x))\big)<-\frac{1}{4}
\end{equation}
then $h_\pm[k_y]$ has infinitely many discrete eigenvalues below $k_y^2+M^2$, and
if 
\begin{equation} \label{eq:teschl2}
\liminf_{|x|\to+\infty}\big(x^2 (A_y(A_y+2k_y)\pm A'_y(x))\big)>-\frac{1}{4}
\end{equation}
then $h_\pm[k_y]$ has finitely many (and possibly none) discrete eigenvalues below $k_y^2+M^2$. 
\end{proposition}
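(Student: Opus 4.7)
My plan is to reduce each $h_\pm[k_y]$ to a one-dimensional Schr\"odinger operator with a potential decaying at infinity and then to invoke Sturm oscillation theory. Setting
\[
L_\pm := h_\pm[k_y]-(k_y^2+M^2) = -\partial_x^2 + W_\pm(x),\qquad W_\pm(x):=A_y(x)(A_y(x)+2k_y)\pm A_y'(x),
\]
we observe that \eqref{eq:A_vanishes}--\eqref{eq:B_vanishes} imply $W_\pm(x)\to 0$ as $|x|\to\infty$. Thus $W_\pm$ is a relatively compact perturbation of $-\partial_x^2$, so $\sigma_{\mathrm{ess}}(L_\pm)=[0,+\infty)$, and the discrete eigenvalues of $h_\pm[k_y]$ below $k_y^2+M^2$ are in one-to-one correspondence (counting multiplicity) with the negative eigenvalues of $L_\pm$.

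The key input is the classical oscillation criterion for Sturm--Liouville operators on $\R$ in the limit-point case: $L_\pm$ has only finitely many negative eigenvalues if and only if the zero-energy equation $-u''+W_\pm u=0$ is non-oscillatory at both $+\infty$ and $-\infty$, and has infinitely many if and only if this equation is oscillatory at (at least) one of the two infinities. The natural reference model is the Euler equation $-u''-cx^{-2}u=0$, whose explicit solutions $x^{(1\pm\sqrt{1-4c})/2}$ are non-oscillatory for $c<1/4$ and oscillatory (the real parts behaving like $\sqrt{|x|}\cos(\tfrac{1}{2}\sqrt{4c-1}\,\log|x|+\varphi)$) for $c>1/4$.

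With this in hand the two halves are quickly disposed of. Under \eqref{eq:teschl1} one picks $c>1/4$ and $R>0$ such that $W_\pm(x)\le -c/x^2$ for all $|x|\ge R$; Sturm comparison on $(R,+\infty)$ (or $(-\infty,-R)$) with the oscillatory Euler model forces every nontrivial solution of $-u''+W_\pm u=0$ to have infinitely many zeros accumulating at that infinity, whence $L_\pm$ possesses infinitely many negative eigenvalues. Conversely, under \eqref{eq:teschl2} one picks $c<1/4$ and $R>0$ with $W_\pm(x)\ge -c/x^2$ for $|x|\ge R$; Sturm comparison in the opposite direction now transfers non-oscillation of the Euler reference to $-u''+W_\pm u=0$ at both $+\infty$ and $-\infty$, so $L_\pm$ has only finitely many negative eigenvalues.

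The only genuinely delicate point is the rigorous transfer of the (non-)oscillation dichotomy from the pure Euler potential to the perturbation $W_\pm$. The clean way to set this up is the Liouville-type substitution $u(x)=\sqrt{|x|}\,v(\log|x|)$ on $|x|>R$, which brings $-u''+W_\pm u=0$ into the autonomous-looking form $v''(t)+Q(t)v(t)=0$ with $Q(t):=-\bigl(\tfrac{1}{4}+e^{2t}W_\pm(\pm e^t)\bigr)$. Under \eqref{eq:teschl1} one gets $Q(t)\ge\varepsilon>0$ for $t$ large, a classical sufficient condition for oscillation; under \eqref{eq:teschl2} one gets $Q(t)\le-\varepsilon<0$ eventually, which forces non-oscillation. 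This is exactly how Corollary~9.43 in \cite{Teschl} is proved, and it is what one cites here.
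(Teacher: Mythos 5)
The paper does not prove this proposition at all—it is imported verbatim as Corollary 9.43 of Teschl's book—so there is nothing internal to compare against; your reconstruction (reduction to $L_\pm=-\partial_x^2+W_\pm$ with $\sigma_{\mathrm{ess}}=[0,\infty)$, the oscillation/finiteness dichotomy at the bottom of the essential spectrum, and Sturm comparison with the Euler equation via the substitution $u(x)=\sqrt{|x|}\,v(\log|x|)$) is correct and is indeed the standard Kneser-type argument behind the cited result. No gaps to flag.
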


Recall that if $\lambda$ is an eigenvalue of $h_\pm[k_y]$ then $\pm\sqrt{\lambda}$ are eigenvalues of $H_{\mathbf{A}}[k_y]$. Thus, we can use  \eqref{eq:teschl1} and \eqref{eq:teschl2} directly as sufficient conditions for the existence of infinite or finite (possibly zero) number of eigenvalues of $H_{\mathbf{A}}[k_y]$. To evaluate these conditions it is sufficient to know the asymptotic expansions of $A_y$ and $A_y'$ at $\pm\infty$. In fact, the same holds true  when using Proposition \ref{prop:slowly} or \ref{prop:gen}. Although in some situations our criteria may overlap with Proposition \ref{prop:teschl}, they are not a special case of the latter (and vice versa). For instance, notice that in Proposition \ref{prop:teschl} we require some type of asymptotic behavior at both endpoints $\pm\infty$, whereas in Proposition \ref{prop:slowly} or \ref{prop:gen} we are interested only in either of the endpoints. Also beware that employing  \eqref{eq:teschl2} we cannot distinguish between none and finitely 
many eigenvalues, whereas Proposition \ref{prop:gen} can guarantee their existence. 

In addition to the criteria on existence of discrete eigenvalues, it is useful to present a sufficient condition on emptiness of the point spectrum of $H_{\mathbf{A}}[k_y]$.

\begin{proposition} \label{prop:no_ev}
Suppose $A_y$ obeys \eqref{eq:B_vanishes} and \eqref{eq:A_vanishes}.
\begin{itemize}
\item If $A_y\geq 0$ then, for all $k_y\geq 0$, there are no eigenvalues in the spectrum of $H_{\mathbf{A}}[k_y]$.
\item If $A_y\leq 0$ then, for all $k_y\leq 0$, there are no eigenvalues in the spectrum of $H_{\mathbf{A}}[k_y]$.
\end{itemize}
\end{proposition}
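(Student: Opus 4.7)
The plan is to argue by contradiction, combining the Hellmann-Feynman identity \eqref{eq:HF} with the link between $H_{\mathbf{A}}[k_y]$ and $h_+[k_y]$ recalled in Section \ref{sec:en_bands}. I focus on the first statement. Suppose $H_{\mathbf{A}}[k_y^*]$ admits an eigenvalue at some $k_y^*\geq 0$. Since the point spectra of $h_+[k_y]$ and $h_-[k_y]$ coincide and, as noted in Section \ref{sec:en_bands}, contain no embedded eigenvalues, this produces a discrete eigenvalue $E^*\in[M^2,k_y^{*2}+M^2)$ of $h_+[k_y^*]$; at $k_y^*=0$ this interval is empty, so we may assume $k_y^*>0$. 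Since $\{h_+[k_y]\}$ forms an analytic family of type (A), $E^*$ extends analytically to an energy band $k_y\mapsto E_n[k_y]$ on a maximal open interval $(a,b)\ni k_y^*$.

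The key estimate is that when $A_y\geq 0$, the Hellmann-Feynman formula \eqref{eq:HF} yields, for every $k_y\in(a,b)\cap[0,+\infty)$,
\[
\frac{\dd}{\dd k_y}E_n[k_y]=2\int_{\R}\bigl(k_y+A_y(x)\bigr)|\psi_n[k_y](x)|^2\,\dd x\geq 2k_y.
\]
I would integrate this inequality from the left endpoint of $(a,b)\cap[0,+\infty)$ up to $k_y^*$ in order to derive a lower bound on $E_n[k_y^*]$ that contradicts $E_n[k_y^*]<k_y^{*2}+M^2$. If $a<0$, then $0\in(a,b)$ and integration on $[0,k_y^*]$ gives $E_n[0]\leq E_n[k_y^*]-k_y^{*2}<M^2$, which contradicts the operator inequality $h_+[0]\geq M^2$. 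If $a\geq 0$, the band must terminate at $a$ by merging into the essential spectrum, i.e., $\lim_{k_y\to a^+}E_n[k_y]=a^2+M^2$, and passing to the limit in the integrated bound yields $E_n[k_y^*]\geq k_y^{*2}+M^2$, again a contradiction.

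The hardest part I anticipate is cleanly justifying the dichotomy for the left endpoint $a$, namely that an analytic band of simple eigenvalues can terminate only by colliding with the essential spectrum. This is a standard consequence of the analytic family of type (A) structure and the Kato-Rellich perturbation theory of simple eigenvalues, already alluded to in Section \ref{sec:en_bands}. The second statement of the proposition is then handled analogously: either by repeating the argument with signs reversed, so that \eqref{eq:HF} produces $\frac{\dd}{\dd k_y}E_n[k_y]\leq 2k_y$ on $(a,b)\cap(-\infty,0]$ when $A_y\leq 0$, or by invoking the symmetry $(A_y(x),k_y)\mapsto(-A_y(x),-k_y)$, which interchanges $h_+[k_y]$ with $h_-[-k_y]$ and hence preserves the point spectrum of $H_{\mathbf{A}}[k_y]$.
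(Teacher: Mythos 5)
Your argument is correct and follows essentially the same route as the paper's own proof: both reduce to the point spectrum of $h_+[k_y]$, confine any putative eigenvalue to $(M^2,k_y^2+M^2)$, and use the Hellmann--Feynman bound $\frac{\dd}{\dd k_y}E_n[k_y]\geq 2k_y$ to show that a band emerging from the essential-spectrum threshold $k_y^2+M^2$ (whose derivative is exactly $2k_y$) can never drop into the allowed region. Your integrated form of the differential inequality, together with the explicit dichotomy on the left endpoint of the band's domain, is just a more carefully spelled-out version of the paper's one-line conclusion that the band ``grows at least as fast as the threshold.''
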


\begin{proof}
We will prove only the first statement, as the proof of the second one is almost identical.
From Section \ref{sec:en_bands}, we already know that $H_{\mathbf{A}}[k_y]$ has eigenvalues if and only if  $h_+[k_y]$ has some. The eigenvalues $E_n[k_y]$ of the latter operator may only be found in the interval $(M^2,M^2+k_y^2)$ and the corresponding energy bands $E_n:k_y\mapsto E_n[k_y]$ can only emerge from the bottom of essential spectrum. Suppose there is such energy band emerging at $\tilde{k}_y\geq 0$ from the bottom of the essential spectrum $M^2+\tilde k_y^2$. By \eqref{eq:HF}, we have
$$\frac{\dd E_n}{\dd k_y}[\tilde k_y]\geq 2\tilde k_y,$$
because $A_y\geq 0$ and $\psi_n[k_y]$ is normalized. But in the same moment, we get
$$\frac{\dd}{\dd k_y}\min\big(\sigma_{ess}(h_+[k_y])\big)=2k_y.$$
This means that the energy band $E_n$  grows at least as fast as the threshold of the essential spectrum and, hence, it is out of the allowed region, which is a contradiction.
\end{proof}

At the end of the section,  several remarks are in order. 
As we mentioned briefly in the introduction, the spectral bands can be associated with dispersionless wave packets. They are constructed as follows. 
For definiteness, let us consider a positive energy band $\sqrt{E_n}$ that exists for any fixed $k_y$ from an open interval $J\subset\R$. There is an associated normalized bound state $\Psi_n[k_y]$ that satisfies 
\begin{equation}\label{H(x,k)E}
(H_A[k_y]-\sqrt{E_n[k_y]})\Psi_n[k_y]=0,\quad \forall k_y\in J.
\end{equation} 
Now, we can construct a wave function $\Psi_n=\Psi_n(x,y)$ using the bound states $\Psi_n[k_y]=\Psi_n[k_y](x)$, 
\begin{equation}\label{Psi_n}
 \Psi_n(x,y)=\frac{1}{\sqrt{2\pi}}\int_{I_n}e^{ik_y y}\beta_n(k_y)\Psi_n[k_y](x)\dd k_y,\quad I_n\subset J.
\end{equation}
The coefficient function $\beta_n(k)$ determines the actual form of the wave packet.  We suppose $\int_{I_n}|\beta_{n}(k_y)|^2dk_y=1$ which guarantees normalization of $\Psi_n$. The wave function $\Psi_n(x,y)$ is dispersionless in transverse direction, i.e., its density of probability along $x$-axis remains intact during the evolution, 
\begin{multline*}
 \int_\R|\ee^{-it\hat H_{\mathbf{A}}}\Psi_n(x,y)|^2\dd y=\int_{I_n}|\ee^{-itH_{\mathbf{A}}}\Psi_n(x,k_y)|^2\dd k_y=\int_{I_n}|\beta_n(k_y)\ee^{-it\sqrt{E_n[k_y]}}\Psi_n[k_y](x)|^2\dd k_y\\
 =\int_\R|\Psi_n(x,y)|^2\dd y.
\end{multline*} 
The speed of the wave packet can be then approximated by the averaged group velocity 
$$v=\frac{\int_{I_n}\big(\sqrt{E_n[k_y]}\big)'\dd k_y}{|I_n|},$$
cf. \cite{dispersionless}.

Now, let us suppose that $A_y$ does not change sign. When it is everywhere positive, then Proposition \ref{prop:fastly} or  Proposition \ref{prop:const_sign}  guarantee existence of discrete energies for any $k_y<k_-<0$, where the explicit value of $k_-$ is specified in different manner by each of the criteria. Simultaneously, Proposition \ref{prop:no_ev} tells that there are no discrete energies for all $k_y\geq0$.  Hence, the band structure of the system with such $A_y$ is substantially asymmetric with respect to $k_y$. As it is suggested by the formula (\ref{eq:HF}),  the energy bands are decreasing for sufficiently large negative values of $k_y$. The speed of the associated dispersionless wave packets is then negative. Hence, there are wave packets moving in one direction but one cannot construct similar wave packets moving in the other direction. (Notice that the wave packets assembled from the bound states corresponding to the negative energies are composed from holes, and, therefore, contribute to the electric current in the same direction as the wave packets assembled from the bound states with positive energies.) A similar spectral
asymmetry can be found in the systems with negative $A_y$. We conclude that when $A_y$ does not change sign, the system can host \textit{unidirectional dispersionless wave packets} moving along $y$-axis. 

On the other hand, if the vector potential $A_y$ changes sign asymptotically, the Proposition \ref{prop:gen} tells us that there are energy bands for all sufficiently large $|k_y|$. In  that case, we can construct dispersionless wave packets that will propagate in both directions along the wave guide. The system will host \textit{bidirectional dispersionless wave packets}. 

When the system possesses additional symmetries, they can be reflected in its spectrum. Let us take the system with even magnetic field, and, hence, with odd $A_y$. Then $h_\pm[k_y]$ is unitarily equivalent to $h_\pm[-k_y]$ via the parity transform $P:\psi(x)\mapsto\psi(-x)$. Therefore, their spectra coincide. Consequently, the eigenvalues of $H_{\mathbf{A}}[k_y]$ coincide with those of $H_{\mathbf{A}}[-k_y]$, too. Putting this together with the first point of Theorem \ref{theo:gen}, we conclude that the family of energy bands of $H_\mathbf{A}$ is symmetric with respect to reflections over horizontal and vertical axes.

\section{Wires\label{sec:wires}}
Let us have an infinite wire of negligible radius that runs parallelly with the $y$-axis. It can be parametrized as $y\mapsto(x_0,y,z_0)$ where $x_0\in\R$ and $z_0\in\R\setminus\{0\}$. The magnetic vector potential along the sheet of the Dirac material (placed in $(x,y)$-plane) induced by the wire carrying the current $I$ in the negative $y$-direction can be computed as $\mathbf{A}=(0,A_y,0)$ with
\begin{equation}\label{Aw1}
A_y(x)=\beta I\ln((x-x_0)^2+z_0^2),
\end{equation}
where
\begin{equation*}
\beta:=\frac{\mu_0\sqrt{\alpha}}{4\pi}. 
\end{equation*}
Thus we have
\begin{equation*}
B(x)=\frac{1}{\sqrt{\alpha}}A_y'(x)=\frac{\mu_0 I}{2\pi}\frac{(x-x_0)}{(x-x_0)^2+z_0^2}.
\end{equation*}
With this choice of $A_y$, the potential term in $h_\pm[k_y]$ tends to infinity as $x\to\pm\infty$. Hence, the spectrum of $h_{\pm}[k_y]$ and, consequently, that of  $H_{\mathbf{A}}[k_y]$ is purely discrete~\cite[Theorem XIII.67]{RSIV}. So there are always infinitely many eigenvalues for any value of $k_y\in\R$.

We will now consider a configuration of  parallel wires with tunable asymptotics of $A_y$. Each of the wires will be parametrized as $y\mapsto(x_j,y,z_j)$ and will carry the current $I_j$. We will always require that the total electric current is zero, i.e., 
\begin{equation}\label{zeronetI} 
\sum_{j=1}^{N}I_j=0.
\end{equation}
Under this condition, the total vector potential diminishes at $x=\pm\infty$. Violating \eqref{zeronetI} leads to the purely discrete spectrum as in the case of the single wire.

If \eqref{zeronetI} holds then the vector potential associated with the system of the wires obeys
\begin{equation} \label{eq:wire_asy}
A_y(x)=\beta \sum_{j=1}^N  I_j\ln((x-x_j)^2+z_j^2)=\frac{\alpha_1}{x}+\frac{\alpha_2}{x^2}+\frac{\alpha_3}{x^3}+\dots
\end{equation}
as $x\to\pm\infty$. Here the the coefficients $\alpha_k$ are functions of the coordinates $(x_j,z_j)$ and currents $I_j$, $j=1,\dots N$. For instance, we have 
\begin{equation*}
 \alpha_1=-2\beta \sum_{j=1}^N I_j x_j,\quad \alpha_2=\beta\sum_{j=1}^N I_j(z_j^2-x_j^2).
\end{equation*}
Since \eqref{eq:wire_asy} is analytic at $\pm\infty$, we can differentiate it term by term. In this way, we obtain
\begin{equation*}
B(x)=\frac{1}{\sqrt{\alpha}}A_y'(x)=-\frac{1}{\sqrt{\alpha}}\left(\frac{\alpha_1}{x^2}+\frac{2\alpha_2}{x^3}+\frac{3\alpha_3}{x^4}+\ldots\right) 
\end{equation*}
as $x\to\pm\infty$.

Firstly, let $\alpha_1\neq 0$. Then $A_y$ is not integrable near infinities and it is positive near $+\infty$ and negative near $-\infty$ (or vice versa, depending on the sign of $\alpha_1$). Using Proposition \ref{prop:slowly} we immediately infer that, for any $k_y\neq 0$, $H_{\mathbf{A}}[k_y]$ has infinite number of discrete eigenvalues.

Secondly, let $\alpha_1=0$ but $\alpha_2\neq0$. For definiteness, take $\alpha_2<0$. By Proposition \ref{prop:gen}, $H_{\mathbf{A}}[k_y]$ has discrete eigenvalues for all sufficiently large $k_y>0$. If $\alpha_2>0$ then the discrete eigenvalues are guaranteed for all large enough negative $k_y's$. In both cases, the number of eigenvalues may be finite or infinite, depending on the particular values of $\alpha_2$ and $k_y$, see Proposition \ref{prop:teschl}. As the vector potential does not change sign asymptotically, the system is a good candidate for hosting the unidirectional dispersionless wave packets. However, we are able to prove their existence only when $A_y$ is of constant sign everywhere, cf. Proposition \ref{prop:no_ev}. In Section \ref{dipole}, we will discuss a specific example with $A_y$ of such type.

Finally, for $\alpha_1=0$, $\alpha_2=0$, and $\alpha_3\neq 0$, we have
\begin{equation*}
A_y(x)=\frac{\alpha_3}{x^3}+\mathcal{O}\left(\frac{1}{x^4}\right)
\end{equation*}
as $x\to\pm\infty$. Therefore, due to Proposition \ref{prop:teschl}, there may be, in principle, at most finitely many eigenvalues in the spectrum of $H_{\mathbf{A}}[k_y]$. Employing Proposition \ref{prop:gen} we see that there are definitely some eigenvalues of $H_{\mathbf{A}}[k_y]$ for all $k_y$ above some positive threshold and all $k_y$ below some negative threshold.

In Section \ref{sec:tunable}, we will propose a simple configuration of wires which will make it possible to switch between different types of asymptotic behavior of $A_y$ just by setting the currents appropriately.
Notice that in real experiments, it is rather impossible to place the wires at exact positions and tune the currents so that $\alpha_k=0$ for all required $k$'s. Imprecision in positioning of the wires prevents the coefficients $\alpha_k$ from vanishing, the vector potential behaves asymptotically as $\alpha_1/x$, and $H_{\mathbf{A}}[k_y]$ has infinite number of discrete eigenvalues for any $k_y\neq 0$. Despite these facts, the situation is actually not that bad. In the next section we will deal with a particular configuration of wires that may host unidirectional  dispersionless wave packets. Although this will change even under a small perturbation of the configuration, we will demonstrate that the eigenstates that emerge due to the perturbation are well localized very far from the wires.

\subsection{Pair of wires \label{dipole}}

Let us now focus on situation when $N=2$, i.e., we have pair of wires. The condition \eqref{zeronetI} then reads $I_1=-I_2$. Hence, equivalently, one can think about an infinite single loop. 
The vector potential generated by the loop is given by $\mathbf{A}=(0,A_y,0)$ with
\begin{equation} \label{Adip1}
A_y(x,z)=\beta I_1\ln\frac{(x-x_1)^2+(z-z_1)^2}{(x-x_2)^2+(z-z_2)^2}.
\end{equation}
We fix the distance between the wires to be $2d$ for some $d>0$. It is convenient to choose the coordinate frame so that
\begin{equation} \label{eq:loop_pos}
 x_1=d\cos\gamma,\quad x_2=-d\cos\gamma,\quad z_1=-d\sin\gamma,\quad z_2=d\sin\gamma,
\end{equation}
where $\gamma\in[0,2\pi)$. If we identify the sheet of the Dirac material with the plane $z=z_0$, then $\gamma$ measures the angle between the sheet and the plane defined by the loop, see Figure \ref{fig:loop}. Substituting $z=z_0$ and \eqref{eq:loop_pos} into \eqref{Adip1} we obtain the vector potential along the sheet,
\begin{equation} \label{eq:A_rotated}
 A_y(x)=\beta I_1\ln\frac{(x-d\cos\gamma)^2+(z_0+d\sin\gamma)^2}{(x+d\cos\gamma)^2+(z_0-d\sin\gamma)^2}.
\end{equation}

\begin{figure} 
\centering
\includegraphics[width=7cm]{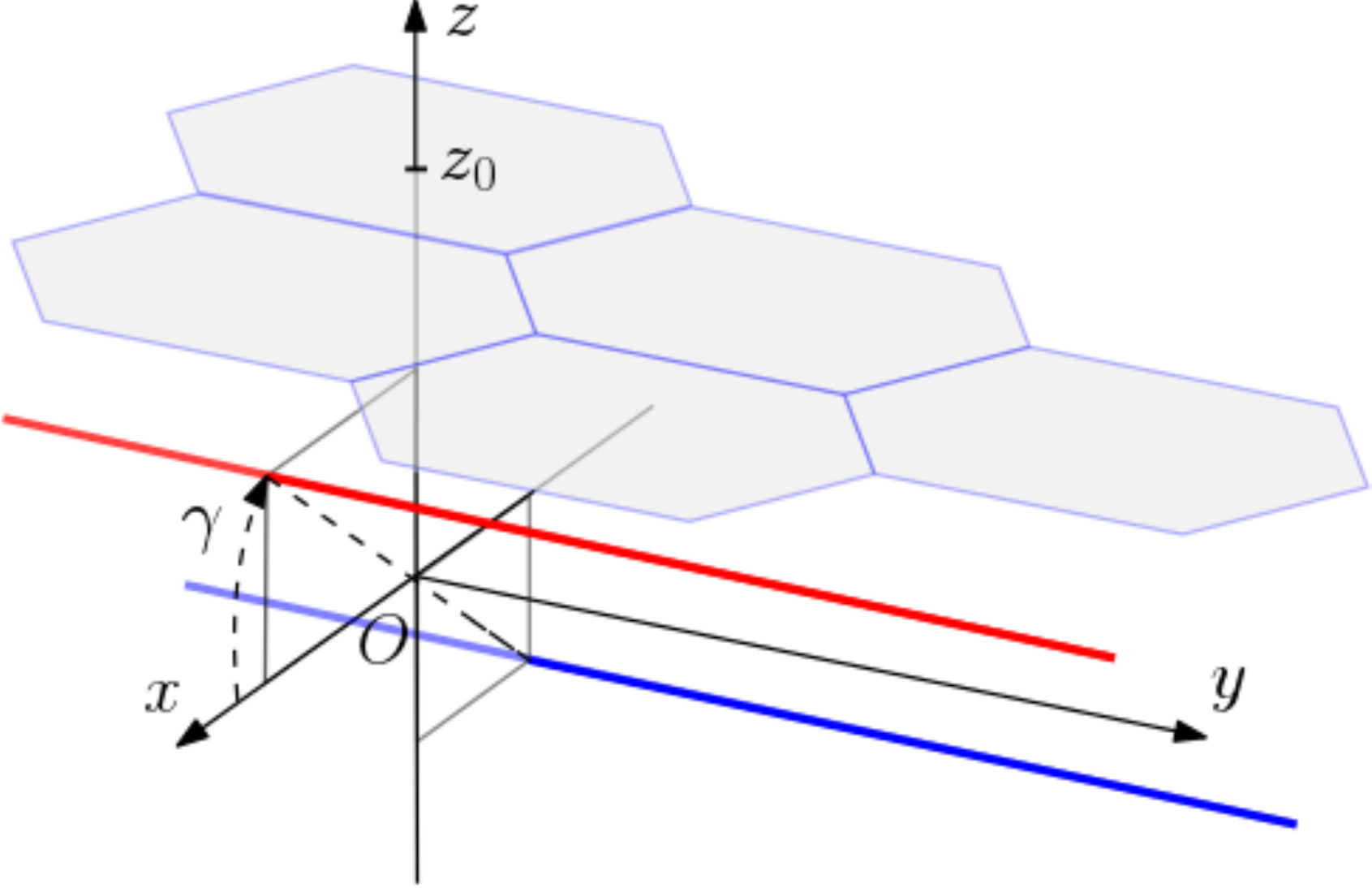} 
\caption{The thick red and blue lines represent a pair of wires which carry currents of the same magnitude but opposite orientations.} \label{fig:loop}
\end{figure}

One can calculate that
\begin{equation} \label{eq:rot_asy}
A_y(x)=\beta I_1\left(-\frac{4d\cos{\gamma}}{x}+\frac{4dz_0\sin{\gamma}}{x^2}+\mathcal{O}\left(\frac{1}{x^3}\right)\right)
\end{equation}
as $x\to\pm\infty$. If $\gamma\notin\{\frac{\pi}{2},\frac{3\pi}{2}\}$, i.e., 
the plane of the loop is not perpendicular to the sheet, the vector potential 
is not integrable. Using Proposition \ref{prop:slowly} we infer that there are infinitely many discrete eigenvalues of $H_{\mathbf{A}}[k_y]$ for any $k_y\neq 0$.

This changes dramatically if $\gamma\in\{\frac{\pi}{2},\frac{3\pi}{2}\}$. In that case, $A_y$ is of constant sign and the leading term in \eqref{eq:rot_asy} is proportional to $x^{-2}$. We can prove that there are discrete eigenvalues for all $k_y$ from some half-line and there are no discrete eigenvalues for all $k_y$ from another half-line. To be more specific, let us start by putting 
$\gamma=\frac{\pi}{2}$, i.e., with
\begin{equation*}
A_y(x)=\beta I_1\ln\frac{x^2+(z_0+d)^2}{x^2+(z_0-d)^2}.
\end{equation*}
Suppose $I_1>0$ and $z_0>d$. The first assumption amounts only to choosing a definite direction of the current, the latter means that the whole loop lies below the sheet.

Since $A_y$ is everywhere positive, $H_{\mathbf{A}}[k_y]$ has no eigenvalues for all $k_y\geq 0$, due to Proposition \ref{prop:no_ev}. As $A_y$ is also integrable, we can use the first criterion of Proposition \ref{prop:fastly} to prove the existence of discrete eigenvalues.
After a rather tedious calculation we arrive at
\begin{align*}
&\int_\R A_y(x)\dd x=4\pi\beta I_1 d,\\
&\int_\R A^2_y(x)\dd x=8\pi\beta^2 I_1^2 \left(z_0\ln\frac{z_0^2-d^2}{z_0^2}+d\ln\frac{z_0+d}{z_0-d}\right).
\end{align*}
Therefore, there are discrete eigenvalues in the spectrum of $H_{\mathbf{A}}[k_y]$, whenever
\begin{equation}\label{condperp1}
k_y<-\beta I_1\left(\ln\frac{z_0+d}{z_0-d}+\frac{z_0}{d}\ln\frac{z_0^2-d^2}{z_0^2}\right).
\end{equation}
Remark that the first term is the same as that we would obtain if we used Proposition \ref{prop:const_sign}, the second term further enlarges the range of admissible $k_y$'s.

The other case, $\gamma=\frac{3\pi}{2}$, is equivalent to the previous one, $\gamma=\frac{\pi}{2}$, after changing the orientation of the currents.  Therefore, for all 
 $$k_y>\beta I_1\left(\ln\frac{z_0+d}{z_0-d}+\frac{z_0}{d}\ln\frac{z_0^2-d^2}{z_0^2}\right),$$
there are some and, for all
 $$k_y\leq0,$$
there are none eigenvalues in the spectrum of $H_{\mathbf{A}}[k_y]$, respectively.

We found that when the plane of the loop and the sheet of the Dirac material are perfectly perpendicular, then  the waveguide might host unidirectional dispersionless wave packets. However, when the matching of the two planes is not perfect, the system may host bidirectional dispersionless wave packets. We will analyze in more detail  a special case of the loop that is slightly rotated away from its perpendicular position, i.e., we take $\gamma=\frac{\pi}{2}+\delta$, where $\delta$ is very small. Let us denote the potential term of $h_\pm[k_y]$ with constants subtracted by $V_\pm$. We have
$$V_\pm(x)=A_y(x)(A_y(x)+2k_y)\pm A_y'(x),$$
and so
\begin{equation} \label{eq:loop_asy}
V_\pm(x)=8d\beta I_1  k_y\left(\frac{\sin{\delta}}{x}+\frac{z_0\cos{\delta}}{x^2}\right)+\frac{(4d\beta I_1 \sin{\delta})^2\mp 4d\beta I_1 \sin{\delta}}{x^2}+\mathcal{O}\left(\frac{1}{x^3}\right)
\end{equation}
as $|x|\to\infty$.

Now, for $\delta=0$, if $k_y$  is sufficiently large then $V_\pm$ is everywhere positive. In fact, $V_\pm$ remains positive on a chosen interval $(-K,K)$ for all sufficiently small $\delta$'s. Let us fix $K$ so that \eqref{eq:loop_asy} is a good approximation outside $(-K,K)$. Then up to the second order in $x$ and the first order in $\delta$, the inequality $V_-<0$ is satisfied for $\delta>0$, provided that
$$\frac{\delta}{x}+\frac{z_0}{x^2}<0,$$
i.e. $x<-z_0/\delta$. For $\delta<0$, $V_+>0$ is satisfied, provided that $x>z_0/|\delta|$. Therefore, for small perturbations of the perfectly perpendicular configuration, the potential well emerges from infinity. The associated bound states are presumably localized near the potential well, i.e., far from the position of the wires.

\subsection{Tunable waveguide} \label{sec:tunable}

In this section, we propose a configuration of five wires such that their positions are fixed whereas by tuning the currents that they carry we can achieve different types of asymptotic behavior for the associated vector potential
\begin{equation*}
A_y(x)=\beta \sum_{j=1}^5I_j\ln((x-x_j)^2+z_j^2).
\end{equation*}
Some of the currents may be zero. In that case, we effectively deal with a lesser number of wires. Let us denote the $j$th wire by $W_j$ and fix their positions as follows,
\begin{equation*}
 x_1=x_4=-x_2=-x_5,\quad x_3=0,\quad z_1=z_2,\quad z_4=z_5,
\end{equation*}
where $0>z_1>z_4$. The wires $W_1,\,W_2,\,W_4$, and $W_5$ form an infinite cuboid below the plane $z=0$ of the sheet of the Dirac material, and $W_3$ is placed in the plane of vertical symmetry of this cuboid. See Table \ref{tab:wires} for figures of some typical cross-sections ($y=const.$).

Firstly, we choose  the currents in the following manner,
\begin{equation}\label{currents1}
I_1=I_2=-I_4=-I_5,\quad I_3=0.
\end{equation}
This corresponds to a pair of loops with common direction of currents that are both perpendicular to the sheet. We have
\begin{equation}\label{A1}
A_y(x)=\beta I_1 \ln\left(1+\frac{(z_1^2-z_4^2)(2x^2+2x_1^2+z_1^2+z_4^2)}{((x-x_1)^2+z_4^2)((x+x_1)^2+z_4^2)}\right)=\frac{\alpha_2}{x^2}+O\left(\frac{1}{x^4}\right)
\end{equation}
as $x\to\pm\infty$.
Due to the additivity of vector potential, one can expect similar result as in the case of a single loop that was treated within Section \ref{dipole}. In particular, the system might host unidirectional  dispersionless wave packets. Their direction may be switched  by changing the orientation of the currents.

If we switch the current orientation in one of the loops, i.e., we set 
\begin{equation}\label{currents2}
I_1=I_5=-I_2=-I_4,\quad I_3=0,
\end{equation}
then the situation changes dramatically, because the leading terms in the asymptotic expansions of vector potentials generated by single loops cancel out. Indeed, after some manipulations we obtain
\begin{equation}\label{A2}
A_y(x)=\beta I_1\ln\left(1-\frac{4xx_1(z_4^2-z_1^2)}{((x-x_1)^2+z_1^2)((x+x_1)^2+z_4^2)}\right)=\frac{\alpha_3}{x^3}+O\left(\frac{1}{x^5}\right)
\end{equation}
as $x\to\pm\infty$.
Therefore, by Proposition \ref{prop:gen}, the system possesses   bidirectional dispersionless wave packets, as there are discrete eigenvalues in the spectrum of $H_{\mathbf{A}}[k_y]$ for any $k_y$ with sufficiently large $|k_y|$. In view of Proposition \ref{prop:teschl}, there is finitely many of them.

Finally, we will present another simple configuration that makes the vector potential to decay even faster. We switch the current off in $W_1$ and $W_2$, and switch it on in $W_3$,  i.e., we set
\begin{equation}\label{currents3}
I_1=I_2=0,\quad I_4=I_5=-\frac{1}{2}I_3.
\end{equation}
Additionally, we require that
\begin{equation}\label{z1}
z_4=-\sqrt{x_4^2+z_3^2}. 
\end{equation}
As we suppose that the relative positions of the wires are fixed, the condition (\ref{z1}) restricts the distance of the system of wires from the sheet. The vector potential then acquires the following form,
\begin{equation}\label{A3}
A_y(x)=-\beta I_4\ln\left(1-\frac{4x_4^2(x_4^2+z_3^2)}{((x-x_4)^2+x_4^2+z_3^2)((x+x_4)^2+x_4^2+z_3^2)}\right)=\frac{\alpha_4}{x^4}+O\left(\frac{1}{x^6}\right)
\end{equation}
as $x\to\pm\infty$.
Clearly, $A_y$ is of constant sign, which depends on the current orientation. In view of Propositions \ref{prop:gen} and \ref{prop:no_ev}, the system may host unidirectional dispersionless   wave packets. Taking Proposition \ref{prop:teschl} into the account, we infer that if  $H_{\mathbf{A}}[k_y]$ has some eigenvalues then there is only finitely many of them. 

We put the derived properties of the discussed system together with the illustration of the wire configuration into Table \ref{tab:wires}.

\begin{table}[htbp] 
	\begin{center}
		\begin{tabular}{l|l|l|l|l}
			 Currents& Sample configuration& $A_y$& Asymptotics&  Spectral characteristics\\
			\hline
			\parbox{3.5cm}{$I_1=I_2=-I_4=-I_5,$\\  $I_3=0$}&
			\parbox[c]{3cm}{
				\includegraphics[scale=0.35]{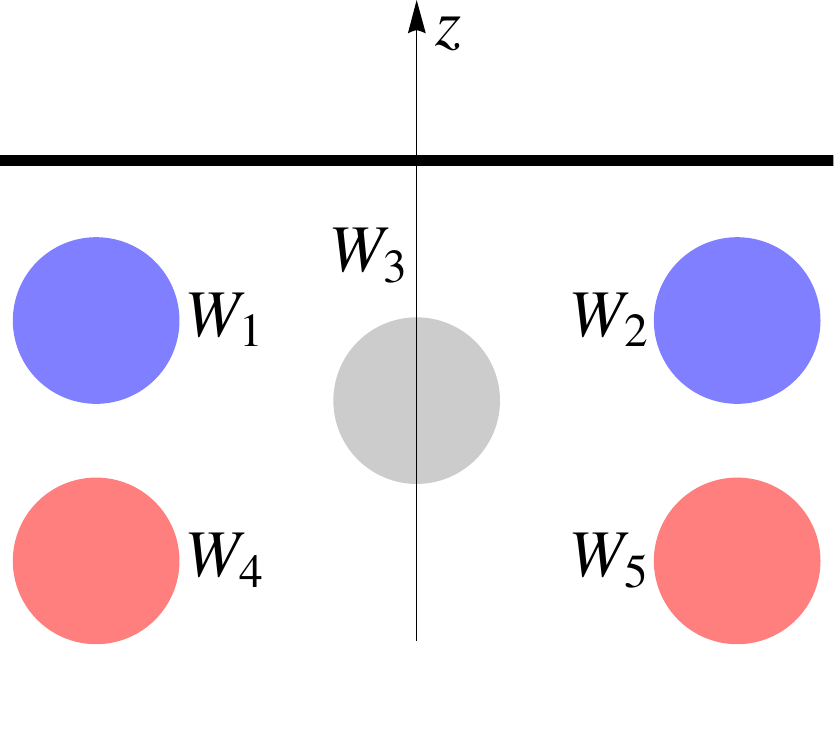}}&
			\eqref{A1}&
			\parbox{3.5cm}{$$A_y(x)=\frac{\alpha_2}{x^2}+O\left(\frac{1}{x^4}\right)$$}&
			\parbox[c]{4cm}{\rule{0pt}{2ex}\raggedright$\bullet$ unidirectional dispersionless wave packets\\ $\bullet$   finitness of the number of discrete eigenvalues of $H_{\mathbf{A}}[k_y]$ depends on particular values of $\alpha_2$ and $k_y$}\\
			\hline
			\parbox{3.5cm}{$I_1=I_5=-I_2=-I_4,$\\ $I_3=0$}&
			\parbox[c]{3cm}{
				\includegraphics[scale=0.35]{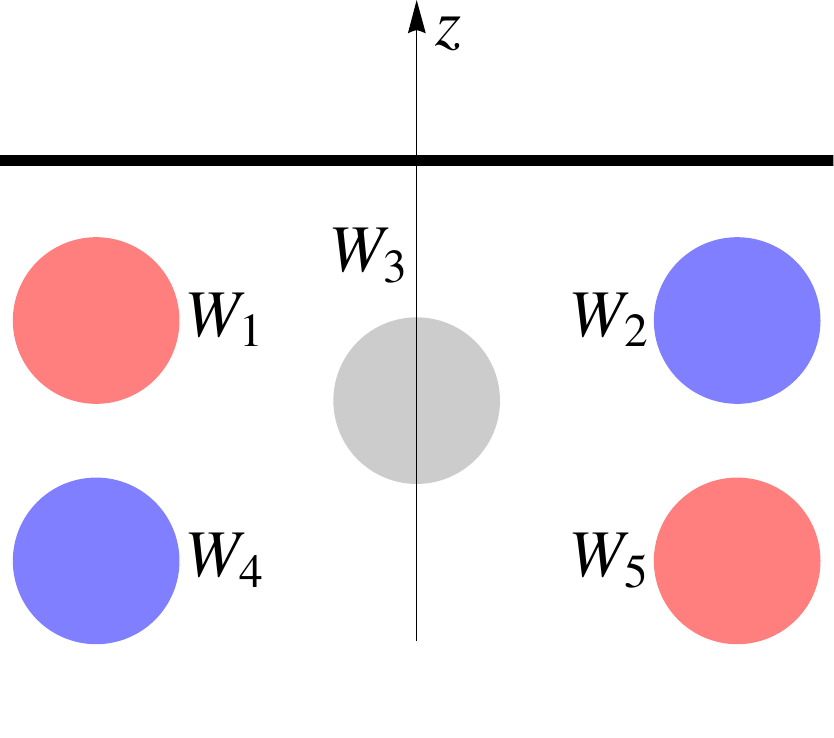}}&
			\eqref{A2}&
			\parbox{3.5cm}{$$A_y(x)=\frac{\alpha_3}{x^3}+O\left(\frac{1}{x^5}\right)$$}&
			\parbox[c]{4cm}{\raggedright $\bullet$ bidirectional dispersionless wave packets\\ $\bullet$ $H_{\mathbf{A}}[k_y]$ has at most finite number of discrete eigenvalues for any $k_y$}\\
			\hline
			\parbox{3.5cm}{$I_1=I_2=0,$\\ $I_4=I_5=-\frac{1}{2}I_3$}&
			\parbox[c]{3cm}{
				\includegraphics[scale=0.35]{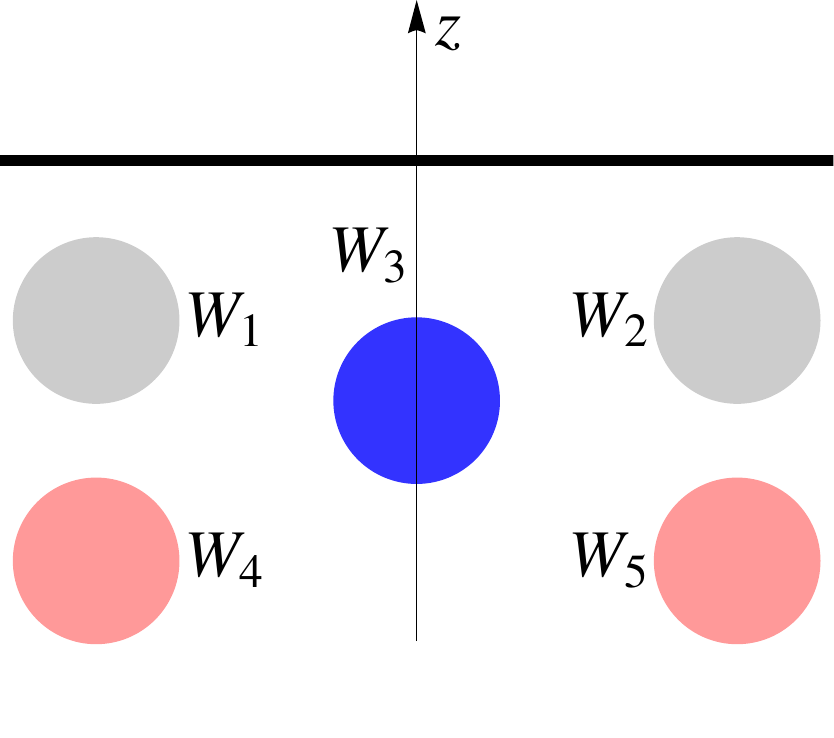}}&
			\eqref{A3}&
			\parbox{3.5cm}{$$A_y(x)=\frac{\alpha_4}{x^4}+O\left(\frac{1}{x^6}\right)$$}&
			\parbox[c]{4cm}{\raggedright $\bullet$ unidirectional  dispersionless wave packets\\ $\bullet$ $H_{\mathbf{A}}[k_y]$ has at most finite number of discrete eigenvalues for any $k_y$}\\
			\hline
		\end{tabular}
	\end{center}
	\caption{Regimes of the waveguide: Grey wires do not carry any current. Currents in the wires of the same color are identically oriented. Thick black line represents the sheet of the Dirac material. } \label{tab:wires}
\end{table} 
\pagestyle{plain}

\section{Magnetized strips} \label{sec:strips}
We will study here spectral properties of $H_{\mathbf{A}}[k_y]$ with the magnetic field generated by infinitesimally thin magnetized strips of constant width $2d$ with the vector of magnetization directed either parallelly or perpendicularly to the strip.
Since the chosen gauge \eqref{eq:gauge} is always transverse, i.e., $\mathrm{div} \mathbf{A}=0$, the vector potential associated with a generic magnetized material with the magnetization $\mathbf{M}(\mathbf{r})$ can be written as 
\begin{equation} \label{eq:vp_magnetization}
\mathbf{A}(\mathbf{r})=\frac{\mu_0\sqrt{\alpha}}{4\pi}\int_{\mathbb{R}^3}\frac{\mathbf{M}(\mathbf{r}_0)\times (\mathbf{r}-\mathbf{r}_0)}{|\mathbf{r}-\mathbf{r}_0|^3}\dd\mathbf{r}_0.
\end{equation}

\subsection{Perpendicularly magnetized strip}

Firstly, consider the situation when the perpendicularly magnetized strip is parallel with the sheet of the Dirac material. For definiteness, we take $[-d,d]\times\R\times\{ 0\}$ for the strip and $\R^2\times\{z_0\}$ with $z_0>0$ for the sheet, see Figure \ref{fig:setting}. 
\begin{figure} 
\centering
\includegraphics[width=7cm]{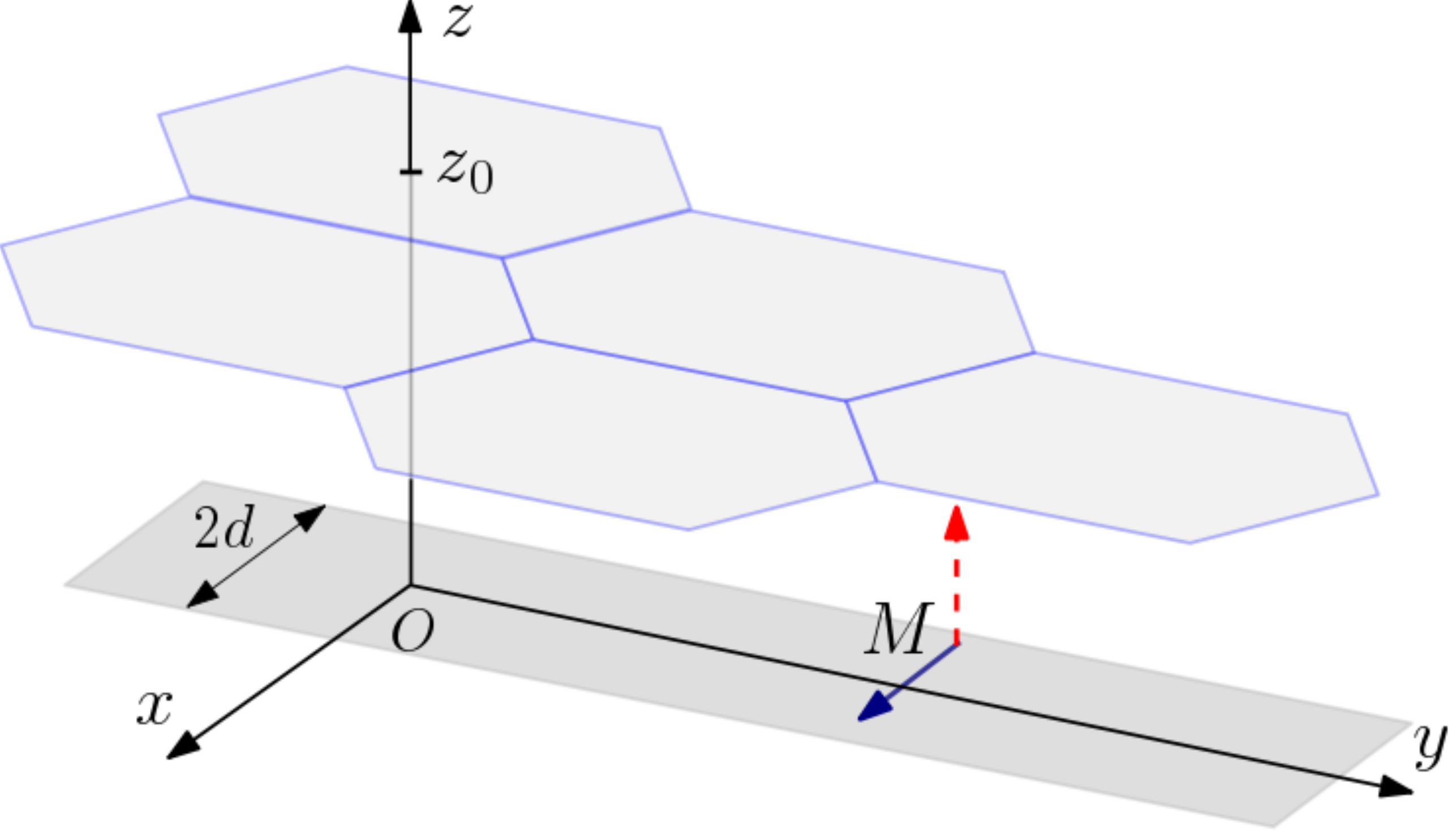} 
\caption{The blue full arrow depicts the parallel magnetization and the red dashed arrow corresponds to the perpendicular magnetization.} \label{fig:setting}
\end{figure}
The magnetization vector $\mathbf{M}$ is then given by
\begin{equation} \label{eq:per_magnetization}
\mathbf{M}(\mathbf{r})=(0,0,m \chi_{[-d,d]}(x)\delta(z)),
\end{equation}
where $m>0$.
Here, the indicator function $\chi_{[-d,d]}$ is defined as $\chi_{[-d,d]}(x)=1$ for all $x\in[-d,d]$ and it is zero elsewhere. 
Substituting \eqref{eq:per_magnetization} into \eqref{eq:vp_magnetization}, we get $A_x=A_z=0$, and, for all $z\neq 0$,
\begin{equation*}
A_y(x,z)=\frac{\tilde\beta}{2}\int_{-d}^d\dd x_0\int_{\R}\dd y_0\frac{x-x_0}{((x-x_0)^2+y_0^2+z^2)^{3/2}}=\frac{\tilde\beta}{2}\ln\left(\frac{z^2+(x+d)^2}{z^2+(x-d)^2}\right),
\end{equation*}
where we defined
$$\tilde\beta:=\frac{\mu_0 m\sqrt{\alpha}}{2\pi}.$$

We studied the same vector potential (with the current $-I_1$  substituted for the magnetization $m$) in the case of the current loop that consists of two infinite wires $\{\pm d\}\times\R\times\{0\}$, cf. \eqref{eq:A_rotated} with $\gamma=0$ and $z_0=z$. Rotating the strip along the $y$-axis by angel $\gamma$ is equivalent to rotating the wires to the position given by \eqref{eq:loop_pos}. Hence, we can use the results of Section \ref{dipole} literally. In particular, one can expect the unidirectional dispersionless wave packets only when the strip is exactly perpendicular to the sheet.

\subsection{Parallelly magnetized strip}
Firstly, let us assume again that the strip is parallel with the sheet and choose the same coordinates as in the previous section. If we take
\begin{equation}\label{parallelly}
\mathbf{M}(\mathbf{r})=(m \chi_{[-d,d]}(x)\delta(z),0,0) \end{equation}
with $m>0$ for the magnetization then we get $A_x=A_z=0$ and, for all $z\neq 0$,
\begin{equation}\label{newparalel}
A_y(x,z)=-\frac{\tilde\beta}{2}\int_{-d}^d\dd x_0\int_{\R}\dd y_0\frac{ z}{((x-x_0)^2+y_0^2+z^2)^{3/2}}=\tilde\beta\left(\arctan\frac{x-d}{z}-\arctan\frac{x+d}{z}\right).
\end{equation}

Next, let us rotate the strip around the $y$-axis by angle $\gamma\in[0,2\pi)$ measured from the positive $x$-axis towards the positive $z$-axis. This amounts to a natural change of coordinates,
\begin{equation*}
 \mathbf{r}'=R\mathbf{r}, \quad\text{with }R=\begin{pmatrix}
                                               \cos{\gamma} & 0 & -\sin{\gamma}\\
                                               0 & 1 & 0\\
                                               \sin{\gamma} & 0 & \cos{\gamma}
                                              \end{pmatrix}.
\end{equation*}
The vector potential $\mathbf{A}'$ associated with the rotated strip is given by \eqref{newparalel}, where we substitute $(x',z')$ for $(x,z)$. Therefore, going back to the original variables, we obtain
\begin{equation*}
 \mathbf{A}(\mathbf{r})=R^{-1}\mathbf{A}'(R\mathbf{r})=\tilde\beta\left(0,\arctan\frac{x\cos\gamma-z\sin\gamma-d}{x\sin\gamma+z\cos\gamma}-\arctan\frac{x\cos\gamma-z\sin\gamma+d}{x\sin\gamma+z\cos\gamma},0\right).
\end{equation*}
In particular, along the sheet placed at $z=z_0$ we get $A_x=A_z=0$ and
\begin{equation*}
 A_y(x)=\tilde\beta\left(\arctan\frac{x\cos\gamma-z_0\sin\gamma-d}{x\sin\gamma+z_0\cos\gamma}-\arctan\frac{x\cos\gamma-z_0\sin\gamma+d}{x\sin\gamma+z_0\cos\gamma}\right).
\end{equation*}
The magnetic field along the sheet is then given by
\begin{equation*}
 B(x)=\frac{1}{\sqrt{\alpha}}A_y'(x)=\frac{\mu_0 m}{2\pi}~\frac{2 d \sin\gamma \left(x^2+d^2-z_0^2\right)+4 d x z_0 \cos\gamma}{(x^2+z_0^2)^2+2 d^2 \big((z_0^2-x^2)\cos(2 \gamma)+2 x z_0 \sin(2\gamma)\big)+d^4}.
\end{equation*}

Looking at asymptotic expansions at $x=\pm\infty$,
\begin{equation} \label{eq:parall_asy}
 A_y(x)=\tilde\beta\left(-\frac{2d\sin\gamma}{x}-\frac{2dz_0\cos\gamma}{x^2}\right)+\mathcal{O}\left(\frac{1}{x^3}\right)
\end{equation}
and
\begin{equation} \label{eq:parall_B_asy}
 B(x)=\frac{\mu_0 m}{2\pi}~\frac{2d\sin\gamma}{x^2}+\mathcal{O}\left(\frac{1}{x^3}\right),
\end{equation}
we infer that the general assumptions of Theorem \ref{theo:gen} are fulfilled.

Formula \eqref{eq:parall_asy} suggests that $A_y$ is not integrable except for the angles $\gamma=0$ or $\gamma=\pi$, which correspond to the case when the strip is parallel with the sheet. Except this specific case, $H_{A}[k_y]$ has infinite number of discrete eigenvalues for any $k_y\neq 0$, which follows from Proposition \ref{prop:slowly}.

Let us now focus on the case when the strip is parallel with the sheet. For definiteness, put $\gamma=0$.  Using \eqref{eq:parall_asy} and \eqref{eq:parall_B_asy} with $\gamma=0$, we obtain
\begin{equation*}
 V_\pm(x)=A_y(x)(A_y(x)+2k_y)\pm A_y'(x)=-\frac{4\tilde\beta d z_0 k_y}{x^2}+\mathcal{O}\left(\frac{1}{x^3}\right)
\end{equation*}
as $x\to\pm\infty$. Therefore, due to Proposition \ref{prop:teschl}, 
$H_{\mathbf{A}}[k_y]$ has infinitely many discrete eigenvalues, whenever
\begin{equation} \label{eq:parall_cond_1}
 k_y>\frac{1}{16\tilde\beta d z_0},
\end{equation}
and at most finite number of them (possibly none), provided that 
\begin{equation} \label{eq:parall_cond_1b}
 k_y<\frac{1}{16\tilde\beta d z_0}.
\end{equation}

In the same moment, we can apply Proposition \ref{prop:const_sign}, because, for $\gamma=0$, $A_y$ is everywhere negative.  It says that,
for all $k_y$ satisfying
\begin{equation} \label{eq:parall_cond_2}
 k_y>\frac{1}{2}|\min A_y|=-\frac{1}{2}A_y(0)=\tilde\beta\arctan\frac{d}{z_0},
\end{equation}
there are discrete eigenvalues in the spectrum of $H_{\mathbf{A}}[k_y]$.
In fact, we can also use Proposition \ref{prop:fastly}, to provide a sharper, though more difficult to evaluate, bound on $k_y$. After some tedious calculation, one arrives at
\begin{align*}
 &\int_\R A_y(x)\dd x=-2\pi\tilde\beta d,\\
 &\int_\R A_y^2(x)\dd x=2\tilde\beta^2 z_0\int_\R \arctan{x}\left(\arctan{x}-\arctan\left(x+\frac{2d}{z_0}\right)\right)\dd x.
\end{align*}
Hence, another sufficient condition for the existence of discrete eigenvalues reads
\begin{equation} \label{eq:parall_cond_3}
 k_y>\frac{\tilde\beta z_0}{2\pi d}\int_\R \arctan{x}\left(\arctan{x}-\arctan\left(x+\frac{2d}{z_0}\right)\right)\dd x.
\end{equation}

We conclude that, for any $k_y$ that obeys \eqref{eq:parall_cond_1} or \eqref{eq:parall_cond_3} (or its more explicit, though weaker, version \eqref{eq:parall_cond_2}), there are discrete eigenvalues in the spectrum of $H_{\mathbf{A}}[k_y]$. There are infinitely many of them if \eqref{eq:parall_cond_1} is fulfilled, and finitely many if \eqref{eq:parall_cond_1b} holds together with \eqref{eq:parall_cond_3}. On the other hand, for $k_y\leq 0$, $H_{\mathbf{A}}[k_y]$ has no eigenvalues, because $A_y$ is everywhere negative and, consequently, one can apply Proposition \ref{prop:no_ev}.

The other case $\gamma=\pi$ amounts to changing the orientation of the magnetization, i.e. changing $\tilde\beta$ for $-\tilde\beta<0$. Consequently, there will always be discrete eigenvalues for all $k_y$ sufficiently negative, and there will not be any eigenvalues for all $k_y\geq 0$ . One can find  a threshold value of $k_y$ for the presence of eigenvalues in exactly the same manner as in the case $\gamma=0$.

Finally, let us remark that when the strip is tilted very slightly from its parallel position, i.e., $\gamma\neq0$ but $|\gamma|<<1$, one can perform an analysis analogous to that from the end of Section \ref{dipole} to conclude that tilting the strip produces a potential well that can accommodate  bound states but is localized far from the strip.

\section{Summary and Outlook}
Magnetic field can be used to create waveguides for two-dimensional Dirac fermions. 
The aim of the current paper was to get insight into spectral properties of both massive and massless Dirac fermions in presence of realistic magnetic fields. As the explicit solution of the associated stationary equation is not possible in these cases, we resolved to the qualitative spectral analysis. We arrived at general results applicable for a broad class of magnetic fields and employed them in the settings where the waveguides are created by placing the Dirac material into the proximity of a set of parallel wires with zero net current or of a magnetized strip.  

The analysis was facilitated by the translational symmetry of the system. The spectral properties of the two-dimensional Dirac Hamiltonian (\ref{H_A}) were fully deduced from those of the class of one-dimensional Hamiltonians $H_\mathbf{A}[k_y]$, see (\ref{Hky}), which is obtained by direct integral decomposition of the total Hamiltonian (\ref{H_A}) and describes dynamics of the system with the fixed longitudinal momentum $k_y$. We focused on a wide class of magnetic fields where the vector potential can be fixed so that it vanishes asymptotically together with its first derivative, see  \eqref{eq:B_vanishes} and \eqref{eq:A_vanishes}. Discrete energies  of $H_{\mathbf{A}}[k_y]$ were of our main interest;  they give rise to energy bands of the total Hamiltonian. These energy bands can be associated with dispersionless wave packets in the systems with translational symmetry  \cite{dispersionless}. 

In Section \ref{sec:general}, we discussed general spectral properties of $H_{\mathbf{A}}[k_y]$. In particular,  we presented a set of practical criteria that make it possible to guarantee existence or absence of discrete energies. They can also tell whether there is finite or infinite number of discrete energies in the spectrum of $H_{\mathbf{A}}[k_y]$ for a given $k_y$. We employed these qualitative tools extensively in the analysis of the magnetic waveguides generated by a set of wires with zero net electric current or magnetized strips.

In Section \ref{sec:wires}, we showed that magnetic field generated by a set of parallel wires with zero net current, placed in the proximity of  the Dirac material (e.g. graphene), can create magnetic  waveguide with tunable transport properties. Depending upon the currents in the wires and their positions, the waveguide can host either unidirectional or bidirectional dispersionless wave packets. We proposed a simple system with five wires whose transport properties are easily controllable by switching the currents in particular wires, cf. Table \ref{tab:wires}. 
In Section \ref{sec:strips}, we analyzed spectral properties of Dirac fermions in the magnetic waveguide produced by the magnetized strip. We considered both  the perpendicular (\ref{eq:per_magnetization}) and the parallel (\ref{parallelly}) magnetization of the strip. In both cases, the waveguide can again host either unidirectional or bidirectional dispersionless wave packets, depending upon the tilt of the strip.

Note that in our current model, no energy can enter the interval $(-M,M)$. It is worth mentioning in this context that there exist models of Dirac fermions with constant mass $M$ on the  half-plane where a specific energy band can cross the interval $(-M,M)$, see \cite{GrLe_06}. Existence of these bands is related to the boundary condition along the half-plane.

Our qualitative approach is complementary to the existing literature where the models offering explicit analytical  solutions are mostly discussed. 
It does not rely on the concrete form of the potential and it can provide valuable information on the spectrum in the cases where the explicit solutions are unknown. 
The presented results are applicable to two-dimensional massive or massless Dirac fermions in  presence of a magnetic field with translational symmetry. Their extension  to systems with other symmetries or with electric fields  would be desirable and should be addressed in the future.

\section*{Acknowledgments}
MF and MT were supported by GA\v CR grant No. 18-08835S and No. 17-01706S, respectively. VJ was supported by No.\ 15-07674Y.

\end{document}